\DeclareMathOperator*{\argmax}{argmax}
\DeclareSymbolFont{AMSb}{U}{msb}{m}{n}
\DeclareMathSymbol{\N}{\mathbin}{AMSb}{"4E}
\DeclareMathSymbol{\Z}{\mathbin}{AMSb}{"5A}
\DeclareMathSymbol{\R}{\mathbin}{AMSb}{"52}
\DeclareMathSymbol{\Q}{\mathbin}{AMSb}{"51}
\DeclareMathSymbol{\I}{\mathbin}{AMSb}{"49}
\DeclareMathSymbol{\C}{\mathbin}{AMSb}{"43}
\newcommand{\NWREG}{\mathit{NWREG}}
\newcommand{\NREG}{\mathit{NREG}}
\newcommand{\cP}{\mathcal{P}}
\newcommand{\cB}{\mathcal{B}}
\newcommand{\commentout}[1]{}
\newcommand{\ucP}{\overline{\cP}^+}
\newcommand{\cQ}{\mathscr{Q}}
\newcommand{\sd}{\mathit{S.D.}}
\newcommand{\cont}{\mathit{cont}}
\newcommand{\chec}{\mathit{check}}
\newcommand{\back}{\mathit{back}}
\newcommand{\regret}{\mathit{reg}}
\newtheorem{definition}{Definition}
\newtheorem{axiom}{Axiom}
\newtheorem{myTheorem}{Theorem}
\newtheorem{lemma}{Lemma}
\newtheorem{example}{\texttt{[Example]}}
\newtheorem{proposition}{Proposition}
\DeclareSymbolFont{AMSb}{U}{msb}{m}{n}
\DeclareMathSymbol{\N}{\mathbin}{AMSb}{"4E}
\DeclareMathSymbol{\Z}{\mathbin}{AMSb}{"5A}
\DeclareMathSymbol{\R}{\mathbin}{AMSb}{"52}
\DeclareMathSymbol{\Q}{\mathbin}{AMSb}{"51}
\DeclareMathSymbol{\I}{\mathbin}{AMSb}{"49}
\DeclareMathSymbol{\C}{\mathbin}{AMSb}{"43}
\newcommand{\citeyear}{\cite}
\begin{document}
\author{
Joseph Y. Halpern\\
Cornell University\\
halpern@cs.cornell.edu 
\and Samantha Leung\\
Cornell University\\
samlyy@cs.cornell.edu}

\title{Weighted Sets of Probabilities and Minimax Weighted Expected
Regret: 
New Approaches for Representing Uncertainty and Making Decisions
\thanks{The authors thank Joerg Stoye for useful comments. Work supported in part by NSF grants IIS-0534064, IIS-0812045, and
IIS-0911036, by AFOSR grants
FA9550-08-1-0438 and FA9550-09-1-0266, and by ARO grant W911NF-09-1-0281.}}

\maketitle

\begin{abstract}
We consider a setting where an agent's uncertainty is represented by
a set of probability measures, rather than a single measure.
Measure-by-measure updating of such a set of measures upon 
acquiring new information is well-known to suffer from problems;
agents are not always able to learn appropriately.  To deal with these
problems, we propose using \emph{weighted sets of probabilities}: a
representation where each measure is associated 
with a \emph{weight}, which denotes its significance.  
We describe a natural approach to updating in such a situation and a
natural approach 
to determining the weights.  
We then show how this representation can be used in decision-making, by modifying a standard approach to decision
making---minimizing expected regret---to obtain \emph{minimax weighted
expected regret} (MWER).  We provide an axiomatization that characterizes 
preferences induced by MWER
both in the static and dynamic case.
\end{abstract}

\section{Introduction}
Agents must constantly make decisions; these decisions are typically made in a setting with uncertainty.
For decisions based on the outcome of the toss of a fair coin, the
uncertainty can be well characterized by probability.
However, what is the probability of you getting cancer if you eat fries
at every meal?  
What if you have salads instead? 
Even experts would not agree on a single probability. 


\commentout{
Not only is it sometimes unnatural to represent uncertainty using a
single probability distribution; but doing so (and assuming 
expected utility maximization) often predict behavior that contradict
intuition.  
The most well-known of these contradictions are probably the Allais
Paradox \cite{Allais1953} and the Ellsberg Paradox \cite{Ellsberg1961}.   
}
Representing uncertainty by a single probability measure and making
decisions by maximizing expected utility leads to further problems.  
Consider the following stylized problem, which serves as a running example in this paper.
The baker's delivery robot, T-800, is delivering $1,000$ cupcakes from the bakery to a banquet. 
Along the way, T-800 takes a tumble down a flight of stairs and breaks some of the cupcakes. 
The robot's map indicates that this flight of stairs must be either ten feet or fifteen feet high. 
For simplicity, assume that a fall of ten feet results in one broken
cupcake, while a fall of fifteen feet results in ten broken cupcakes.  


\begin{table}
	\centering
		\begin{tabular}{|l|l|l|l|l|l|l|} \hline
&	$1$ broken & $10$ broken \\ \hline
		$\cont $  				& 10,000
		& -10,000				\\ \hline 
			$\back$ 				& 0									& 0 \\ \hline 
			$\chec	$					&	5,001							& -4,999 \\ \hline 
		\end{tabular}
	\caption{Payoffs for the robot delivery problem. 
Acts are in the leftmost column.
The remaining two columns describe the outcome for the two sets of states that matter.} 
	\label{tab:QC}
\end{table}

T-800's choices and their consequences are summarized in Table \ref{tab:QC}.
Decision theorists typically model decision problems with states, acts,
and outcomes: 
the world is in one of many possible states,
and the decision maker chooses an \emph{act}, a function mapping states to outcomes.
A natural state space in this problem is $\{$\emph{good,broken}$\}^{1000}$, where each state is a possible state of the cupcakes. 
However, all that matters about the state is the number of broken cakes, so we can further restrict to states with either one or ten broken cakes. 

T-800 can choose among three acts: $\cont$: continue the delivery attempt; $\back$: go back for new cupcakes; 
or $\chec$: open the container and count the number of broken cupcakes, and then decide to continue or go back, depending on the number of broken cakes.
The client will tolerate one broken cupcake, but not ten broken cupcakes. 
Therefore, if T-800 chooses $\cont$, it obtains a utility of $10,000$ if there is only one broken cake, 
but a utility of $-10,000$ if there are ten broken cakes. 
If T-800 chooses to go $\back$, then it gets a utility of $0$. 
Finally, checking the cupcakes costs $4,999$ units of utility but is reliable, so if T-800 chooses $\chec$, 
it ends up with a utility of $5,001$ if there is one broken cake, and a utility of $-4,999$ if there are ten broken cakes.

If we try to maximize expected utility, we must assume some probability
over states.  What measure should be used?  
There are two hypotheses that T-800 entertains: (1) the stairs are ten
feet high and  
(2) the stairs are fifteen feet high. 
Each of these places a different probability on states. 
If the stairs are ten feet high, we can take all of the $1,000$ states
where there is exactly one broken cake 
to be equally probable, and take the remaining states to have
probability 0; if the stairs are fifteen feet high, we can take all of
the $C(1000,10)$ states where there are exactly ten 
broken cakes to be equally probable, and take the remaining states to
have probability 0.  One way to model T-800's uncertainty about the height of the stairs is to take each hypothesis to be equally likely.  
However, not having any idea about which hypothesis holds is very different 
from believing that all hypotheses are equally likely.  
It is easy to check that 
taking each hypothesis to be equally likely makes $\chec$ the act
that maximizes utility, but taking the probability that the stairs are fifteen feet high to be $.51$ makes $\back$ the act that maximizes expected
utility, and taking the probability that the stairs are ten feet high to be $.51$ makes $\cont$ the act that maximizes expected utility.  
What makes any of these choices the ``right'' choice?

\commentout{
Optimal portfolio problems are other examples where
Bayesian\footnote{Here, as is in most of the Decision Theory literature,
Bayesianism implies (1) using states to model all uncertainty, and a
single probability distribution to represent the agent's beliefs; (2)
using Bayes rule to update beliefs; (3) maximizing expected utility with
respect to the beliefs. \cite{Gilboa2004}} models produce
counterintuitive decisions. 
In empirical economics, stock returns are often assumed to have a stable
distribution.  
A novice investor is uncertain about the true distribution, thus the
optimal portfolio devotes less in stocks.  
Each trading period, the investor learns more about the distribution. 
Common sense dictates that the proportion of stocks in the optimal portfolio should increase as the investor becomes more confident. 
However, Bayesian models fail to capture this; the optimal proportion of
stocks does \emph{not} increase as more data is gathered
\cite{Epstein2007}. 
}

It is easy to construct many other examples where 
a single probability measure does not capture uncertainty, and does
not result in what seem to be reasonable decisions, when combined with
expected utility maximization.
%
A natural alternative, which has often been considered in the
literature, is to represent the agent's uncertainty by a \emph{set} of
probability measures.  
%
For example, in the delivery problem, the agent's beliefs could be
represented by two probability measures, $\Pr_1$ and $\Pr_{10}$, one
for each hypothesis. 
Thus, $\Pr_1$ 
assigns uniform probability to all states with exactly one broken cake,
and $\Pr_{10}$ assigns uniform probability to all states with exactly ten 
broken cakes.

But this representation also has problems. Consider the delivery
example again.  
Why should T-800 be sure
that there is exactly either one broken cake or ten broken cakes?
Of course, we can replace these two hypotheses by hypotheses that say
that the probability of a cake being broken is either $.001$ or $.01$, but this
doesn't solve the problem.  Why should the agent be sure that the
probability is either exactly $.001$ or exactly $.01$?  Couldn't it also
be $.0999$? Representing uncertainty by a set of measures still
places a sharp boundary on what measures are considered possible
and impossible.

A second problem involves updating beliefs.
How should beliefs be updated if they are represented by a set of
probability measures? 
The standard approach for updating a single measure is by
conditioning.  
The natural extension of conditioning to 
sets of measure is measure-by-measure updating:
conditioning each measure on the information (and also removing
measures that give the information probability 0).

However, measure-by-measure updating can produce some rather
counterintuitive outcomes.
In the delivery example, suppose that a passer-by tells T-800 the information $E$: the first 100 cupcakes are good.
Assuming that the passer-by told the truth, intuition tells us that there
is now more reason to believe that there is only one broken cupcake.

However, $\Pr_1 \mid E$ places uniform probability on all states where
the first 100 cakes are good, and there is exactly one broken cake among
the last 900.   
Similarly, $\Pr_{10}\mid E$ places uniform probability on all states
where the first 100 cakes are good, and there are exactly ten broken
cakes among the last 900.  
$\Pr_1 \mid E$ still places probability $1$ on there being one broken
cake, just like $\Pr_1$,  $\Pr_{10} \mid E$ still places probability $1$
on there being ten broken cakes.  There is no way to capture the fact
that T-800 now views the hypothesis $\Pr_{10}$ as less likely, even
if the passer-by had said instead that the first $990$ cakes are all good! 

Of course, both of these problems would be alleviated if we placed
a probability on hypotheses, but, as we have already observed, this leads
to other problems.  In this paper, we propose an intermediate approach:
representing uncertainty using \emph{weighted sets of probabilities}.
That is, each probability measure is associated with a weight.  These
weights can be viewed as probabilities; 
indeed, if the set of probabilities is finite, we can normalize them so
that they are effectively probabilities.  Moreover, in one important
setting, we update them in the same way that we would update
probabilities, using likelihood (see below).  On the other hand, 
these weights do not act like probabilities if the set of probabilities is
infinite.  For example, if we had a countable set of
hypotheses, we could assign them all weight $1$ (so that, intuitively,
they are all viewed as equally likely), but there is no uniform
measure on a countable set. 

More importantly, when it comes to decision making, we use the weights
quite differently from how we would use second-order probabilities on
probabilities.  Second-order probabilities would let us define a
probability on events (by taking expectation) and maximize expected
utility, in the usual way.  Using the weights, we instead define a 
novel decision rule, \emph{minimax weighted expected regret (MWER)}, that
has some rather nice properties, which we believe will make it widely 
applicable in practice.  
If all the weights are $1$, then MWER is just the standard
\emph{minimax expected regret (MER)} rule (described below).  
If the set of probabilities is a singleton, then MWER agrees with
(subjective) expected utility maximization (SEU).  More interestingly
perhaps, if the weighted set of measures converges to a single measure
(which will happen in one important special case, discussed below),
MWER converges to SEU.  Thus, the weights give 
%
us a smooth, natural way of interpolating between MER and SEU.  

In summary, weighted sets of probabilities allow us to represent ambiguity (uncertainty about the correct probability distribution).
Real individuals are sensitive to this ambiguity when making decisions, and the MWER decision rule takes this into account.
Updating the weighted sets of probabilities using likelihood allows the initial ambiguity to be resolved as more information about the true distribution is obtained.

\commentout{
We propose a representation that gives the `best of both worlds' of the
probabilistic and multiple prior models.  
We maintain a set of likelihoods of priors, and use the likelihoods as weights indicating how seriously a certain prior should be taken. 
Note that these likelihoods are not to be interpreted as a second-order
distribution. 
This approach allows us to model ambiguity (via the presence of multiple priors) as well as confidence (via the weights on each prior). 

The natural way to update such a belief representation on an event $E$
would be to condition the individual measures to reflect the
elimination of possible states, and also update the relative likelihoods
based on the event.  
Assuming that there is a stable true state, and that some prior with
positive weight assigns positive probability to this true state, this
updating method captures our intuitions for learning -- in the quality
control example, the evidence that some of the items are good increases
the likelihood that the raw material was of good quality. 
We refer to this updating method on the weighted
representation as \emph{likelihood updating}.  

Although \emph{likelihood updating} is orthogonal to the decision rule
being used, for concreteness, we focus on applying this weighted
representation and updating technique to minimax expected regret (MER)
preferences.  
The main reason we focus on applying \emph{likelihood updating} to MER is
because the weighting of priors is naturally achieved for MER by
multiplying each expected regret by the likelihood of the prior.  
Multiplying the expected regret from a particular prior by a larger
weight amplifies the regret and makes the prior more significant.  
The same is not true for the MMEU representation -- multiplying a
positive payoff by a larger weight makes the payoff better, but
multiplying a negative payoff by a larger weight makes the payoff
worse. 
}
 
We now briefly explain MWER, by first discussing MER.  
MER is a probabilistic variant of the minimax regret decision rule proposed by
Niehans \citeyear{Niehans1948} and Savage \citeyear{Savage1951}.  
Most likely, at some point, we've second-guessed ourselves and thought
``had I known this, I would have done that instead''.  
That is, in hindsight, we regret not choosing the act that turned out
to be optimal for the realized state, called the \emph{ex post} optimal
act.
The \emph{regret} of an act $a$ in a state $s$ is the difference (in utility)
between the ex post optimal act in $s$ and $a$.
Of course, typically one does not know the true state at the time of 
decision. 
Therefore the regret of an act is the worst-case regret, taken over all states.
The \emph{minimax regret} rule orders acts by their regret.

The definition of regret applies if there is no probability on states.
If an agent's uncertainty is represented by a single probability
measure, then we can compute the \emph{expected regret} of an act $a$:
just multiply the regret of an act $a$ at a state $s$ by the probability
of $s$, and then sum.  It is well known that the order on acts
induced by minimizing expected regret is identical to that induced by
maximizing expected utility 
(see \cite{Hayashi2006} for a proof).
If an agent's uncertainty is represented by a set $\cP$ of probabilities,
then we can compute the expected regret of an act $a$ with respect to
each probability measure $\Pr \in \cP$, and then take the worst-case expected
regret.  The MER (Minimax Expected Regret) rule orders acts according to
their worst-case expected regret, preferring 
the act that minimizes the worst-case regret. 
If the set of measures is the set of \emph{all} probability measures on
states, then it is not hard to show that MER induces the same order on acts as
(probability-free) minimax regret.  Thus, MER generalizes both minimax
regret (if $\cP$ consists of all measures) and expected utility
maximization (if $\cP$ consists of a single measure).

MWER further generalizes MER. If we start with a \emph{weighted} set of
measures, then we can compute the weighted expected regret for each one (just
multiply the expected regret with respect to $\Pr$ by the weight of $\Pr$) and compare acts 
by their worst-case weighted expected regret.

Sarver \citeyear{Sarver2008} also proves a representation theorem
that involves putting a multiplicative weight on a regret quantity.  
However, his representation is fundamentally different from MWER.
In his representation, regret is a factor only when comparing two \emph{sets}
of acts; 
the ranking of individual acts is given by expected utility maximization.
By way of contrast, we do not compare sets of acts.

It is standard in decision theory to axiomatize a decision rule by means
of a representation theorem. 
For example, Savage \citeyear{Savage1954} showed that if an agent's preferences
$\succeq$ satisfied several axioms, such as completeness and
transitivity, then the agent is behaving as if she is maximizing expected
utility with respect to some utility function and probabilistic belief.  

If uncertainty is represented by a set of probability measures,
then we can generalize expected utility maximization to \emph{maxmin
expected utility (MMEU)}.   
MMEU compares acts by their worst-case expected utility, taken over all
measures.  
MMEU has been axiomatized by Gilboa and Schmeidler
\citeyear{GilboaSchmeidler1989}.
MER was axiomatized by Hayashi \citeyear{Hayashi2006} and Stoye
\citeyear{StoyeRegret}. 
We provide an axiomatization of MWER.
We make use of ideas introduced by Stoye \citeyear{StoyeRegret} in his
axiomatization of MER, but the extension seems quite nontrivial.



We also consider a dynamic setting, where beliefs are updated by new
information.  
If observations are generated according to a probability measure that is
stable over time, then, as we suggested above,
there is a natural way of updating the weights given observations, using
ideas of likelihood.  The idea is straightforward.  
After receiving some information $E$, we update each probability $\Pr
\in \cP$ to $\Pr \mid E$, and take its weight to be $\alpha_{\Pr} =
\Pr(E) / \sup_{\Pr' \in \cP} \Pr'(E)$. 
If more than one $\Pr \in \cP$ gets updated to the same $\Pr \mid E$,
the $\sup$ of all such weights is used. 
Thus, the weight of $\Pr$ after
observing $E$ is modified by taking into account the likelihood of
observing $E$ assuming that $\Pr$ is the true probability.
We refer to this method of updating weights as \emph{likelihood updating}. 

If observations are generated by a stable measure 
(e.g., we observe the outcomes of repeated flips of a biased coin) then, 
as the agent makes more and more observations, the weighted set of
probabilities of the agent will, almost surely, look more 
and more like a single measure.  The weight of the measures in
$\cP$ closest to the measure generating the observations converges to
1, and the 
weight of all other measures converges to 0.  This would not be
the case if uncertainty were represented by a set of probability measures and
we did measure-by-measure updating, as is standard.  
As we mentioned above, this means that MWER converges to SEU.

We provide an axiomatization for dynamic MWER with likelihood updating.
We remark that 
a dynamic version of MMEU with measure-by-measure updating 
has been axiomatized by Jaffray \citeyear{Jaffray1994}, Pires
\citeyear{Pires2002}, 
and Siniscalchi \citeyear{Siniscalchi2011}.

Likelihood updating is somewhat similar in spirit to an updating method
implicitly proposed by Epstein and Schneider \citeyear{Epstein2007}.  They
also represented uncertainty by using (unweighted) sets of probability
measures.  They choose a threshold $\alpha$ with $0 < \alpha < 1$,
update by conditioning, and eliminate all measures whose 
relative
likelihood
does not exceed the threshold.  This approach also has the property
that, {over time, all that is left in} $\cP$ are the measures closest
to the measure generating the observations; all other measures are
eliminated.   
However, it has the drawback that it introduces a
new, somewhat arbitrary, parameter $\alpha$.

\commentout{
Notably, there is no mention of weights in M-DC, and yet M-DC guarantees
that the weights are likelihoods.  

Finally, in consideration of applying \emph{likelihood updating} on MER in
sequential decision making, we note that MWER with likelihood updating is
dynamically inconsistent (due to changing preferences, the path actually
followed may be suboptimal from the perspective of the initial, and
possibly every, decision point). 
This problem can be circumvented by using backward induction -- the agent is made to correctly anticipate her future choices. 
}

\commentout{
\emph{likelihood updating} has similar properties to the learning method
Here we will consider a simplification of their method and refer to it
as Likelihood Threshold.  
Although Likelihood Threshold was designed for an extension of MMEU
called recursive multiple priors \cite{EpsteinSchneider2003}, and
specifically to learn a memoryless randomizing mechanism\footnote{The
method in \cite{Epstein2007} also deals with more complex scenarios
where there may be ambiguous elements in an environment that can never
be learned no matter how many observations are made. Here we focus on
the case where all parameters are learnable.}, the technique can also be
used to define conditional preferences for MER. 
Likelihood Threshold compares the relative likelihood of the priors, and
completely ignores a prior if its likelihood is lower than some
predefined threshold. 
proposed by Epstein and Schneider \citeyear{Epstein2007}.  

One drawback of Likelihood Threshold is the arbitrariness of the
threshold $\alpha$ -- it is unclear what $\alpha$ should be for a
particular problem, and whether it should be constant throughout an
entire problem.  
On the other hand, since a hard division is not needed, \emph{likelihood updating} does not require a specific threshold value.

\emph{likelihood updating} and Likelihood Threshold differ only when there is an intermediate amount of information. 
In the absence of information, updating is irrelevant. 
When there is abundant information, \emph{likelihood updating} results
in the convergence of all the weight to the single true prior, meaning
that all other prior distributions would be disregarded.  
The same occurs for Likelihood Threshold.
In the intermediate case, we argue that \emph{likelihood updating} is `smooth', that slight perturbations in the information result in slight perturbations in the preferences, while Likelihood Threshold may not be.  
We believe that the smoothness is a desirable quality.
}

Chateauneuf and Faro \citeyear{ChateauneufFaro2009} also consider
weighted sets of probabilities (they model the weights using what they
call \emph{confidence functions}), although they impose more constraints on
the weights than we do.   They then define and provide a representation
of a generalization of MMEU using weighted sets of probabilities that
parallels our generalization of MER.   
Chateauneuf and Faro do not discuss the dynamic situation; specifically,
they do not consider how weights should be updated in the light of new
information.

The rest of this paper is organized as follows. 
Section~\ref{sec:defCP} introduces the weighted sets of probabilities representation, and Section~\ref{sec:MWER} introduces the MWER decision rule. 
Axiomatic characterizations of static and dynamic MWER are provided in Sections~\ref{sec:charMWER} and \ref{sec:charMWERL}, respectively. 
We conclude in Section~\ref{sec:conclusion}.

\section{Weighted Sets of Probabilities}
\label{sec:defCP}
A set $\cP^+$ of \emph{weighted 
probability measures} on a set $S$ consists of pairs
$(\Pr,\alpha_{\Pr})$, where $\alpha_{\Pr} \in [0,1]$ and
$\Pr$ is a probability measure on $S$.%
\footnote{In this paper, for ease of exposition, we take the state space
$S$ to be finite, and assume that all sets are measurable.  We can
easily generalize to arbitrary measure spaces.}
Let $\cP = \{\Pr: \exists \alpha (\Pr,\alpha) \in \cP^+\}$.
We assume that,
for each $\Pr \in \cP$, there is exactly one $\alpha$ such that
$(\Pr,\alpha) \in \cP^+$.  We denote this number by $\alpha_{\Pr}$, and view
it as the \emph{weight of $\Pr$}.
We further assume for convenience that weights have 
been normalized so that there is at least one 
measure $\Pr \in \cP$ such that $\alpha_{\Pr} = 1$.%
\footnote{While we could take weights to be probabilities, and
normalize them so that they sum to 1, if $\cP$ is finite, this 
runs into difficulties if we have an infinite number of measures in
$\cP$.  For example, if we are tossing a coin, and $\cP$ includes all
probabilities on heads from  $1/3$ to $2/3$, using a uniform
probability, we would be forced to assign each individual probability
measure a weight of 0, which would not work well in the definition of
MWER.}
We remark that, just as we do, Chateaunef and Faro
\citeyear{ChateauneufFaro2009} take weights to be in the interval $[0,1]$.
They impose additional requirements on the weights.  For example, they
require that the weight of a convex combination of two probability
measures is at least as high as the weight of each one.  This does not
seem reasonable in our applications.  For example, an agent may know
that one of two measures is generating his observations, and 
give them both weight 1, while giving all other distributions weight 0.

As we observed in the introduction, one way of updating weighted sets of
probabilities is by using likelihood updating. 
We use $\cP^+ \mid E$ to denote the result of applying likelihood
updating to $\cP^+$.  
Define $\ucP(E) = \sup\{\alpha_{\Pr}\Pr(E): \Pr \in \cP\}$;
if $\ucP(E) > 0$, set $\alpha_{\Pr,E}= \sup_{\{\Pr'\in\cP :
\Pr'\mid E = \Pr\mid 
E\}} \frac{\alpha_{\Pr'}\Pr'(E)}{\ucP(E)}$.  Note that given a
measure  $\Pr \in \cP$, there may be
several distinct measures $\Pr'$ in $\cP$ such that $\Pr'\mid E = \Pr
\mid E$.  Thus, we take the weight of $\Pr \mid E$ to be the $\sup$ of the
possible candidate values of  $\alpha_{\Pr,E}$.  By dividing by
$\ucP(E)$, we guarantee that $\alpha_{\Pr,E} \in [0,1]$, and that
there is some measure $\Pr$ such that $\alpha_{\Pr,E} = 1$,
as long as there is some pair $(\alpha_{\Pr},\Pr) \in \cP$ such that 
$\alpha_{\Pr} \Pr(E) = \ucP(E)$.
If $\ucP(E) > 0$, 
we take $\cP^+ \mid E$ to be 
$$\{(\Pr\mid E, \alpha_{\Pr,E} ): \Pr \in \cP\}.$$ 
If $\ucP(E) = 0$, then $\cP^+ \mid E$
is undefined.    


In computing $\cP^+ \mid E$, we update not just the probability
measures in $\cP$, but also their weights.  The new weight combines the
old weight with the likelihood.  
Clearly, if all measures in $\cP$ assign the same probability to the
event $E$, then likelihood updating and measure-by-measure
updating coincide.   
This is not surprising, since such an observation $E$ does not give us 
information about the relative likelihood of measures. 
We stress that using likelihood updating is appropriate only if  
the measure generating the observations is
assumed to be stable.   
For example, if observations of heads and tails are generated by coin
tosses, and a coin of possibly different bias is tossed in each round,
then likelihood updating would not be appropriate.

It is well known that, when conditioning on a single probability
measure, the order that information is acquired is irrelevant; the same
observation easily extends to sets of probability measures.  As we now
show, it can be further extended to weighted sets of probability measures.
\begin{proposition}
Likelihood updating is consistent in the sense that for all
$E_1,E_2\subseteq S$, $(\cP^+ \mid E_1)\mid E_2 = (\cP^+ \mid E_2)\mid
E_1 = \cP^+ \mid (E_1\cap E_2)$,
provided that  $\cP^+ \mid (E_1\cap E_2)$ is defined.
\end{proposition}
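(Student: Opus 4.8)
The plan is to verify two things separately: that the three weighted sets share the same underlying measures, and that each surviving measure is assigned the same weight in all three. Throughout I would keep in mind that a measure $\Pr \in \cP$ contributes to $\cP^+ \mid E$ only when $\Pr(E) > 0$, since otherwise $\Pr \mid E$ is undefined, and that the hypothesis ``$\cP^+ \mid (E_1 \cap E_2)$ is defined'' means precisely $\ucP(E_1\cap E_2) > 0$, which keeps all the normalization constants that appear below strictly positive.

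For the underlying measures, I would invoke the standard consistency of conditioning a single measure, $(\Pr \mid E_1) \mid E_2 = \Pr \mid (E_1 \cap E_2)$, which is defined exactly when $\Pr(E_1 \cap E_2) > 0$. Hence $\Pr$ contributes a measure to $(\cP^+ \mid E_1) \mid E_2$ iff $\Pr(E_1 \cap E_2) > 0$ iff it contributes to $\cP^+ \mid (E_1 \cap E_2)$, and the contributed measures coincide; the same holds with $E_1,E_2$ interchanged. So all three sets consist of the measures $\{\Pr \mid (E_1 \cap E_2) : \Pr \in \cP,\ \Pr(E_1 \cap E_2) > 0\}$.

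The substantive part is matching the weights; by symmetry it suffices to compare $(\cP^+ \mid E_1) \mid E_2$ with $\cP^+ \mid (E_1 \cap E_2)$. Write $\cQ^+ = \cP^+ \mid E_1$, with underlying set $\cQ$ and weights $\beta$. The first key step is an algebraic simplification: for $R = \Pr \mid E_1 \in \cQ$ with $R(E_2) > 0$, I would expand $\beta_R = \alpha_{\Pr,E_1}$ by its defining supremum and use that every $\Pr''$ with $\Pr'' \mid E_1 = R$ satisfies $\Pr''(E_1 \cap E_2)/\Pr''(E_1) = R(E_2)$. Pulling the constant $R(E_2)$ inside the supremum then cancels the factor $\Pr''(E_1)$, giving
\[ \beta_R\, R(E_2) = \frac{1}{\ucP(E_1)} \sup_{\{\Pr'' \in \cP\,:\, \Pr'' \mid E_1 = R\}} \alpha_{\Pr''}\,\Pr''(E_1 \cap E_2). \]
Taking the supremum over $R \in \cQ$ merges the nested suprema into a single supremum over $\cP$ (measures with $\Pr''(E_1)=0$ contribute $0$ and are harmless), yielding the telescoping identity for normalization constants,
\[ \overline{\cQ}^+(E_2) = \frac{\ucP(E_1 \cap E_2)}{\ucP(E_1)}. \]
Substituting both displays into the definition of the weight of $Q = \Pr \mid (E_1 \cap E_2)$ in $\cQ^+ \mid E_2$, the factor $\ucP(E_1)$ cancels, and the supremum over $\{R \in \cQ : R \mid E_2 = Q\}$ together with the inner supremum over $\{\Pr'' : \Pr'' \mid E_1 = R\}$ collapses to a single supremum over $\{\Pr'' \in \cP : \Pr'' \mid (E_1 \cap E_2) = Q\}$, again by single-measure consistency. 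What remains is exactly $\alpha_{\Pr,E_1\cap E_2}$, the weight of $Q$ in $\cP^+ \mid (E_1 \cap E_2)$. Interchanging $E_1$ and $E_2$ gives the remaining equality.

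I expect the main obstacle to be bookkeeping the two layers of suprema: because distinct measures in $\cP$ can collapse to the same conditional (both under $\mid E_1$ and under $\mid(E_1\cap E_2)$), each weight is a supremum over a collapsing class, and one must check that the class one sups over after two successive updates is genuinely identical to the class for the single combined update, and that the cancellations of $\Pr''(E_1)$ and of $\ucP(E_1)$ hold uniformly across that class rather than merely pointwise. The positivity guaranteed by the definedness hypothesis is what makes every division in this argument legitimate.
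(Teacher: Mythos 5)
Your proof is correct and follows essentially the same route as the paper's: the central identity in both is the telescoping of the unnormalized weights, $\alpha_{\Pr}\Pr(E_1)\Pr(E_2\mid E_1)=\alpha_{\Pr}\Pr(E_1\cap E_2)$, combined with the observation that the normalization constants telescope accordingly. The only difference is that you make explicit the bookkeeping of the nested suprema over collapsing classes of measures and the positivity of every normalizer, which the paper's brief ``proportional to'' argument leaves implicit.
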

\begin{proof}
By standard results, $(\Pr\mid E_1)\mid E_2 = (\Pr\mid E_2)\mid E_1 =
\Pr\mid (E_1\cap E_2)$. 
Since the weight of the measure $\Pr\mid E_1$ is proportional to
$\alpha_{\Pr}\Pr(E_1)$, the weight of $(\Pr\mid E_1)\mid E_2$ is
proportional to $\alpha_{\Pr}\Pr(E_1)\Pr(E_2\mid E_1) =
\alpha_{\Pr}\Pr(E_1\cap E_2)$.  
Likewise, the weight of $(\Pr\mid E_2)\mid E_1$ is proportional to
$\alpha_{\Pr}\Pr(E_2)\Pr(E_1 \mid E_2)= \alpha_{\Pr}\Pr(E_1\cap E_2)$.  
Since, in all these cases, the $\sup$ of the weights is normalized to $1$, the
weights of corresonding measures in $\cP^+\mid (E_1\cap E_2)$, $(\cP^+
\mid E_1)\mid 
E_2$ and $(\cP^+ \mid E_2)\mid E_1$ must be equal.  
\end{proof}

\section{MWER}\label{sec:MWER}

We now define MWER formally.  
Given a set $S$ of states and a set $X$ of outcomes,
an \emph{act} $f$ (over $S$ and $X$) is a function mapping $S$ to $X$.
For simplicity in this paper, we take $S$ to be finite.
%
Associated with each outcome $x\in X$ is a utility: $u(x)$ is the
utility of outcome $x$.  We call a tuple $(S,X,u)$ a
\emph{(non-probabilistic) decision problem}.
%
To define regret, we need to assume that we are also given a set $M
\subseteq X^S$ of feasible acts, called the \emph{menu}.
The reason for the menu is that, as is well known (and we will
demonstrate by example shortly), regret can depend on the menu. 
%
Moreover, we assume that every menu $M$ has utilities bounded from
above.  That is, 
we assume that for all menus $M$, $\sup_{g\in M}u(g(s))$
is finite.  This ensures that the regret of each act is well defined.%
\footnote{
Stoye \citeyear{Stoye2011} assumes that, for each menu $M$, there is a
finite set $A_M$ of acts such that $M$ consists of all the convex
combinations of the acts in $A_M$.  Our assumption is clearly much
weaker than Stoye's.}
For a menu $M$ and act $f\in M$, the regret of $f$ with respect to
$M$ and decision problem 
$(S,X,u)$ 
in state $s$ 
is
$$\regret_M(f,s) = \left(\sup_{g\in M}u(g(s))\right) - u(f(s)).$$
That is, the regret of $f$ in state $s$ (relative to menu $M$) is the
difference between $u(f(s))$ and the highest utility possible in state
$s$ (among all the acts in $M$).
The regret of $f$ with respect to $M$ and decision problem $(S,X,u)$ is
the worst-case regret over all states:
$$
\max_{s\in S}\regret_M(f,s).
$$
We denote this as $\regret_M^{(S,X,u)}(f)$,  and usually omit the
superscript $(S,X,u)$ if it is clear from context.
%
If there is a probability measure $\Pr$ over the states, then we can
consider the \emph{probabilistic decision problem} $(S,X,u,\Pr)$.  
The \emph{expected regret} of $f$ with respect to $M$ is
$$
\regret_{M,\Pr}(f) = 
\sum_{s\in S}\Pr(s)\regret_M(f,s).
$$
%
If there is a set $\cP$ of probability measures over the
states, then we consider the $\cP$-decision problem $(S,X,u,\cP)$.
The maximum expected regret of $f\in M$ with respect to $M$ and 
$(S,X,u,\cP)$ is 
$$
\regret_{M,\cP}(f) = \sup_{\Pr\in \cP} \left( \sum_{s\in S}\Pr(s)
\regret_M(f,s)   \right). 
$$
%
Finally, if beliefs are modeled by weighted probabilities $\cP^+$, then
we consider the $\cP^+$-decision problem $(S,X,u,\cP^+)$.
The maximum weighted expected regret of $f\in M$ with respect to 
$M$ and $(S,X,u,\cP^+)$ is 
$$
\regret_{M,\cP^+}(f) 
= \sup_{\Pr\in \cP}\left( \alpha_{\Pr} \sum_{s\in
S}\Pr(s)\regret_M(f,s)  \right).
$$


The MER decision rule is thus defined for all $f,g\in X^S$ as
$$
f\succeq_{M,\cP}^{S,X,u} g \mbox{ iff } \regret_{M,\cP}^{(S,X,u)}(f) \le
\regret_{M,\cP}^{(S,X,u)}(g). 
$$ 
That is, $f$ is preferred to $g$ if the maximum expected regret of $f$
is less than that of $g$.  We can similarly define
$\succeq_{M,\regret}$, 
$\succeq_{M,\Pr}^{S,X,u}$, and
$\succeq_{M,\cP^+}^{S,X,u}$ by replacing $\regret_{M,\cP}^{(S,X,u)}$ by
$\regret_M^{(S,X,u)}$, 
$\regret_{M,\Pr}^{(S,X,u)}$, and $\regret_{M,\cP^+}^{(S,X,u)}$,
respectively.  Again, we usually omit the superscript $(S,X,u)$ 
and subscript $\Pr$ or $\cP^+$, and just write $\succeq_M$, if it is
clear from context. 

\commentout{
And the MWER decision rule is defined for all $f,g\in L$ as 
$$
 \right.
$$
In the delivery example, the act $\chec$ can be 
thought of as $(\cont)E(\back)$, where $E$ is the set of states where
there is only one broken cake.

Roughly speaking, MDC says that you prefer $f$ to $g$ once you learn
$E$ if and only if, for any act $h$, you also prefer $fEh$ to $gEh$
before you learn anything.  
This seems reasonable, since learning that the true state was in $E$ is
conceptually similar to knowing that none of your choices matter
off of $E$.  

To state MDC formally, we need to be careful about the menus involved.
Let $MEh = \{fEh: f \in M\}$.  
We can identify unconditional preferences with preferences conditional
on $S$; that is, we identify $\succeq_M$ with $\succeq_{S,M}$.
We also need to restrict the sets $E$ to which MDC applies.
Recall that conditioning using likelihood updating is undefined for an
event such that $\ucP(E) = 0$.
That is,
$\alpha_{\Pr}\Pr(E)=0$ for all $\Pr\in\cP$.  
As is commonly done, we capture the idea that conditioning on $E$ is
possible using the notion of a \emph{non-null} event.
\begin{definition}
An event $E$ is \emph{null} if, for all $f, g \in \Delta(Y)^S$ and menus
$M$ with $fEg, g \in M$, we have $fEg \sim_M g$.   
\end{definition}


%

\begin{description}
\item[MDC.]
For all non-null events $E$,
$f \succeq_{E,M} g$ iff $fEh \succeq_{MEh} gEh$ for some $h \in M$.%
\footnote{Although we do not need this fact, it is worth noting that 
the MWER decision rule has the property that  $fEh \succeq_{MEh} gEh$
for some act $h$ iff   $fEh \succeq_{MEh} gEh$ for all acts $h$.  Thus,
this property follows from Axioms \ref{axiom:T}--\ref{axiom:boundedness}.}  
\end{description}
The key feature of MDC is that it allows us to reduce all the
conditional preference orders $\succeq_{E,M}$ to the unconditional order
$\succeq_M$, to which we can apply Theorem~\ref{thm:completeness}.  

\commentout{
As an intuitive discussion, we look at why MWERL satisfies \ref{eq:mDC}. 
First let's consider the direction $f\succeq_{E,M} g \Rightarrow fEh \succeq_{MEh} gEh$. 
Suppose $f\succeq_{E,M} g$.
As observed earlier, MWERL actually compares regrets using the \emph{unconditional} probabilities ($\Pr(s|E)\cdot \Pr(E) = \Pr(s)$). 
Therefore, $f\succeq_{E,M} g$ means that the maximum expected regret associated with states in $E$ is lower for $f$ than for $g$. 
Since for the menu $MEh$, all feasible acts are identical outside of
$E$, it is not possible for any act to have any regrets outside of $E$
with respect to the menu $MEh$. 
Therefore, MWERL prefers $fEh$ to $gEh$ unconditionally. 

Next, consider the opposite direction.  
Suppose that, for some $h$, $fEh \succeq_{MEh} gEh$.
Since for the menu $MEh$, all feasible acts are identical outside of $E$, the only reason that MWERL prefers $fEh$ over $gEh$ is because $fEh$ has lower maximum expected regret on $E$ with respect to $\Pr$. 
Since MWERL essentially compares regrets using the unconditional probabilities $\Pr$, MWERL thus also must prefer $fEh$ over $gEh$ conditional on $E$. 
Furthermore, it is easily verified that $fEh\sim_{E,M} f$ and that the axioms guarantee that $\succeq_{E,M}$ is a weak order. 
Therefore we can conclude that $f\succeq_{E,M} g$.
}

\begin{myTheorem} \label{thm:update}
For all $Y$, $U$, $S$, and $\cP^+$, the family of preference orders
$\succeq_{\cP^+\mid E,M}^{S,Y,U}$ for events $E$ such that $\ucP(E) > 0$ 
satisfies Axioms \ref{axiom:T}--10 and MDC.
Conversely, if a family of preference orders $\succeq_{E,M}$ on the acts in $\Delta(Y)^S$ 
satisfies Axioms \ref{axiom:T}--10 and MDC, then there exists 
a utility $U$ on $Y$ and a weighted set $\cP^+$ of probabilities 
on $S$ such that $C(\cP^+)$ is regular, and 
for all non-null $E$, $\succeq_{E,M} = \succeq_{\cP^+\mid E,M}^{S,Y,U}$.
Moreover, $U$ is unique up to 
affine transformations, and $C(\cP^+)$ is unique in the sense that if $\cQ^+$ represents $\succeq_{E,M}$, and $C(\cQ^+)$ is regular, then $C(\cQ^+) = C(\cP^+)$.
\end{myTheorem}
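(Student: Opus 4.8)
The plan is to leverage Theorem~\ref{thm:completeness} twice: once to get Axioms~\ref{axiom:T}--\ref{axiom:boundedness} for each fixed conditional order, and once, in the converse, to represent the unconditional order and then transport that representation to every non-null $E$ via MDC. The engine of the whole argument is a single computation showing that likelihood updating is engineered precisely so that the conditional weighted expected regret equals, up to a positive constant, the unconditional weighted expected regret of $fEh$ relative to the menu $MEh$.

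For soundness, fix $E$ with $\ucP(E) > 0$. Then $\cP^+ \mid E$ is itself a weighted set of probabilities, so $\succeq_{\cP^+\mid E, M}^{S,Y,U}$ is a MWER order and by the first half of Theorem~\ref{thm:completeness} it satisfies Axioms~\ref{axiom:T}--\ref{axiom:boundedness}. It remains to verify MDC. First I would observe that for any act $h$, every act in $MEh$ agrees with $h$ off $E$, so $\regret_{MEh}(fEh,s) = 0$ for $s \notin E$ and $\regret_{MEh}(fEh,s) = \regret_M(f,s)$ for $s \in E$; hence $\regret_{MEh,\cP^+}(fEh) = \sup_{\Pr\in\cP}\alpha_{\Pr}\sum_{s\in E}\Pr(s)\regret_M(f,s)$, which notably does not depend on $h$. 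Unwinding the definition of $\cP^+\mid E$ using $\Pr(s\mid E)\Pr(E) = \Pr(s)$ and $\alpha_{\Pr,E} = \alpha_{\Pr}\Pr(E)/\ucP(E)$, I would then show $\regret_{M,\cP^+\mid E}(f) = \ucP(E)^{-1}\sup_{\Pr\in\cP}\alpha_{\Pr}\sum_{s\in E}\Pr(s)\regret_M(f,s)$, so that $\regret_{MEh,\cP^+}(fEh) = \ucP(E)\,\regret_{M,\cP^+\mid E}(f)$. Since $\ucP(E) > 0$ is a constant independent of the act, the two orders coincide, giving MDC (and, because the expression is independent of $h$, the ``for some $h$'' $=$ ``for all $h$'' remark of the footnote). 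To line up the hypotheses I would also check that $E$ is null in the preference sense exactly when $\ucP(E) = 0$: if $\ucP(E)=0$ then every $\Pr$ with $\alpha_{\Pr}>0$ has $\Pr(E)=0$, so states in $E$ carry no weighted mass and $fEg$ and $g$ have identical weighted regret in every menu, making $E$ null; the converse follows by exhibiting a separating pair of acts from the two prizes of distinct utility.

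For the converse, I would first restrict attention to the unconditional family, identified with $E = S$: since each $\succeq_{E,M}$ satisfies Axioms~\ref{axiom:T}--\ref{axiom:boundedness}, so does $\{\succeq_{S,M}\}_M$, and Theorem~\ref{thm:completeness} supplies a utility $U$ and a weighted set $\cP^+$ with $C(\cP^+)$ regular such that $\succeq_{S,M} = \succeq_{\cP^+,M}^{S,Y,U}$. Now take any non-null $E$, so $\ucP(E) > 0$ by the null characterization. For $f,g\in M$ and any $h\in M$, MDC gives $f\succeq_{E,M} g$ iff $fEh \succeq_{S,MEh} gEh$, which by the representation of the unconditional order and the regret computation above is equivalent to $\regret_{M,\cP^+\mid E}(f)\le \regret_{M,\cP^+\mid E}(g)$, i.e. $f\succeq_{\cP^+\mid E,M}^{S,Y,U} g$. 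This is the desired representation. Uniqueness is inherited from Theorem~\ref{thm:completeness}: $S$ is non-null by Nontriviality, and $\cP^+\mid S = \cP^+$ (since $\Pr\mid S=\Pr$ and normalization gives $\alpha_{\Pr,S}=\alpha_{\Pr}$), so any $\cQ^+$ representing the conditional family represents the unconditional order at $E=S$; hence $U$ is pinned down up to affine transformation and $C(\cQ^+) = C(\cP^+)$ whenever $C(\cQ^+)$ is regular.

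The genuinely new content, and the step I expect to need the most care, is the regret computation in soundness together with the null/$\ucP(E)=0$ correspondence: everything hard about representing MWER is already packaged in Theorem~\ref{thm:completeness}, and MDC is exactly the axiom that reduces the dynamic problem to that static one. The main things to get right are the scaling constant $\ucP(E)$ and the fact that the unconditional regret of $fEh$ against $MEh$ is insensitive to $h$, since these are what make MDC simultaneously sound and strong enough to force likelihood updating rather than some other updating rule.
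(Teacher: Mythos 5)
Your proposal is correct and follows essentially the same route as the paper's proof: apply Theorem~\ref{thm:completeness} to the unconditional order $\succeq_{S,M}$, establish that $E$ is null iff $\ucP(E)=0$, and prove the key identity $\regret_{MEh,\cP^+}(fEh) = \ucP(E)\,\regret_{M,\cP^+\mid E}(f)$ (independent of $h$), which makes MDC both sound and sufficient to transport the static representation to every non-null $E$, with uniqueness inherited from the unconditional case. The only cosmetic difference is that you write $\alpha_{\Pr,E}=\alpha_{\Pr}\Pr(E)/\ucP(E)$ rather than the $\sup$ over measures sharing the same conditional, but the outer supremum over $\Pr\in\cP$ absorbs this just as in the paper's computation.
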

\begin{proof}

Since $\succeq_M = \succeq_{S,M}$ satisfies Axioms
\ref{axiom:T}--10, there must exist a 
weighted set $\cP^+$ of probabilities on $S$ 
and a utility function $U$
such that $f\succeq_M g$ iff
$f \succeq_{M,\cP^+}^{S,Y,U} g$. 
%
%
%
%
%
We now show that if $E$ is non-null, then $\ucP(E) > 0$, and 
$f\succeq_{E,M} g$ iff $f \succeq_{M,\cP^+\mid E}^{(S,X,u)} g$.  

For the first part, it clearly is equivalent to show that if $\ucP(E) =
0$, then $E$ is null.  
So suppose that
$\ucP(E) = 
0$.  Then $\alpha_{\Pr} \Pr(E)=0$ for all $\Pr\in\cP$.  This means that 
$\alpha_{\Pr} \Pr(s) = 0$ for all $\Pr \in \cP$ and $s \in E$.  Thus, for all
acts $f$ and $g$, 
$$\begin{array}{lll}
&\regret_{M,\cP^+}(f E g)\\
=& \sup_{\Pr\in\cP} \left(\alpha_{\Pr} \sum_{s\in S}\Pr(s)
\regret_M(fEg,s)\right) \\ 
= &\sup_{\Pr\in\cP}\ \left(\alpha_{\Pr} \left( \sum_{s\in E} \Pr(s) 
\regret_M(f,s) \right.  \right)\\ 
&+ \left. \sum_{s\in {E^c}}\Pr(s)  \regret_M(g,s)   \right) \\ 
= &\sup_{\Pr\in\cP} \left(\alpha_{\Pr}\sum_{s\in S} \Pr(s)  
\regret_M(g,s)\right)\\
= &\regret_{M,\cP^+}(g). 
\end{array}
$$
Thus, $f E g \sim_M g$ for all acts $f, g$ and menus $M$ containing
$fEg$ and $g$, which means that $E$ is null.  



For the second part, we first show that if $\ucP(E) > 0$, then 
for all $f,h\in M$, we have that
$$\regret_{MEh,\cP^+}(fEh) = \ucP(E) \regret_{M,\cP^+\mid E}(f).$$
We proceed as follows: 
$$\begin{array}{lll}
&\regret_{MEh,\cP^+}(fEh) \\
= &\sup_{\Pr\in\cP}
\left(\alpha_{\Pr} \sum_{s\in S} \Pr(s) 
\regret_{MEh}(fEH,s) \right)\\
=& \sup_{\Pr\in\cP} 
\left( \alpha_{\Pr}\Pr(E) \sum_{s\in E}\Pr(s \mid E)  
\regret_M(f,s)\right. \\
	&\left.+ \alpha_{\Pr}\sum_{s\in E^c} \Pr(s) \regret_{\{h\}}(h,s)
	\right)\\
=& 
\sup_{\Pr\in\cP} 
\left( \alpha_{\Pr}\Pr(E) \sum_{s\in E}\Pr(s|E)  \regret_M(s,f)\right)\\
= &\sup_{\Pr\in\cP} 
\left( \ucP(E) \alpha_{\Pr,E} \sum_{s\in
E}\Pr(s|E)  \regret_M(f,s) \right)\\
&\mbox{[since $\alpha_{\Pr,E} = \sup_{\{\Pr'\in\cP : \Pr'\mid E =
\Pr\mid E\}} \frac{\alpha_{\Pr'}\Pr'(E)}{\ucP(E)}$]}\\ 
=& \ucP(E) \cdot \regret_{M,\cP^+\mid E}(f). 
\end{array}$$

Thus, for all $h\in M$,
\begin{align*}
& \regret_{MEh,\cP^+}(fEh)  \leq \regret_{MEh,\cP^+}(gEh) \\
\mbox{ iff }& \ucP(E)\cdot \regret_{M,\cP^+\mid E}(f) \leq
\ucP(E)\cdot \regret_{M,\cP^+\mid E}(g)\\ 
\mbox{ iff }& \regret_{M,\cP^+\mid E}(f) \leq 
\regret_{M,\cP^+\mid E}(g).
\end{align*}
It follows that the order induced by $\cP^+$ satisfies MDC.

Moreover, if \ref{axiom:T}--10 and MDC hold, then for a
weighted set $\cP^+$ that represents $\succeq_{M}$, we have
$$\begin{array}{ll}
&f \succeq_{E,M} g \\
\mbox{ iff } &\mbox{ for some } h\in M, fEh \succeq_{MEh} gEh\\ 
\mbox{ iff } &\regret_{M,\cP^+\mid E}(f) \leq \regret_{M,\cP^+\mid
E}(g),
\end{array}
$$
as desired.

Finally, the uniqueness of $C(\cP^+)$ follows from
Theorem~\ref{thm:completeness}, which says that the family
$\succeq_{S,M}$ of preferences is already sufficient to guarantee the
uniqueness of $C(\cP^+)$. 
\end{proof}

\commentout{
Recall that MDC stands for menu-dependent dynamic consistency. 
The term dynamic consistency (DC) refers to the property 
\begin{align}\label{equ:DC}
fEh\succeq gEh \Leftrightarrow f \succeq_E g\tag{DC}
\end{align} from conditional subjective expected utility preferences. 
The essence of the condition requires that the unconditional preferences are respected after the event $E$ is observed. 
DC is also discussed by Epstein and Schneider \citeyear{Epstein1993}.

It is interesting that the statement of MDC is almost identical to DC,
except MDC specifies the menus on each side, while DC does not. 
DC does not need to specify menus because the order $\succeq$
represented by subjective expected utility maximization is
menu-independent.  
}

Analogues of MDC have appeared in the literature before in the context
of updating preference orders.  In particular, Epstein and Schneider
\citeyear{Epstein1993} discuss a menu-independent version of MDC, although
they do not characterize updating in their framework.  
Sinischalchi \citeyear{Siniscalchi2011} also uses an analogue of MDC in his
axiomatization of measure-by-measure updating of MMEU. 
Like us, he starts with an
axiomatization for unconditional preferences, and adds an axiom called  
\emph{constant-act dynamic consistency} (CDC), 
somewhat analogous to MDC,
to extend the
axiomatization of MMEU to deal with conditional preferences.  
\commentout{
It would be good if, for any event $E$ and any weighted probabilities
$\cP^+$, in $\cP^+ |E$ there is always some distribution with weight
$1$. 
However, as the following example illustrates, even if $\cP^+$ has weight $1$ on some probability distribution, $\cP^+|E$ might not.
\begin{example}
Let $S= \{ s_1,s_2, s_3\}$ and let $E=\{s_1, s_2\}$. 
Let $\succeq_M$ be represented as minimizing weighted expected regret with respect to a utility function, and a set of weighted probabilities, $\cP^+$, where $$\alpha^{\succeq}_{\Pr} = 
\begin{cases}
1 \text{ if } \Pr(s_3) = 1\\
\Pr(s_1|E) \text{ if } \Pr(s_2 |E) \neq 0 \\
0 \text{ otherwise }
\end{cases}.$$
Now consider the updated weighted probabilities $\cP^+ \mid E$.
By definition, $\alpha^{\succeq}_{\Pr \mid E} = \sup_{\{\Pr'\in\cP :
\Pr'\mid E = \Pr\mid E\}} \frac{\alpha^{\succeq}_{\Pr'}\Pr'(E)}{\ucP(E)}$.
In this example, $\ucP(E) = \sup\{\alpha^{\succeq}_{\Pr}\Pr(E): \Pr \in \cP\} = \sup \{ \Pr(s_1|E)) : \Pr\in\cP \} = 1$.
However, there is no $\Pr$ such that $\alpha_{\Pr}\Pr(E) = \Pr(s_1|E) \Pr(E) = 1$, and thus no $\Pr|E$ that has $\alpha_{\Pr|E}= 1$.
\end{example}
}

\commentout{
Theorem \ref{thm:update} shows that any set of conditional preference
relations $\succeq_{E,M}$ satisfies AX-MWER and \ref{eq:mDC} if and only
if these conditional preferences can be represented as MWERL.  
In other words, if the unconditional preferences $\succeq_{S,M}$ satisfy
AX-MWER, then the agent behaves `as-if' she were a MWER agent. 
If the conditional preferences also satisfy \ref{eq:mDC}, then the agent behaves `as-if' she uses \emph{likelihood updating} to update her beliefs. 
Moreover, any MWERL preferences satisfy AX-MWER and \ref{eq:mDC}. 
}



\section{Dynamic Inconsistency}
There is an important issue when one attempts to apply MWER with
likelihood updating to dynamic decision problems. 
If you want to execute a plan, at every step you'll need to stick with that plan and execute the corresponding part of the plan. 
However, after following the initial steps of an ex-ante optimal plan, a MWER agent may no longer wish to adhere to the plan. 
In such a situation, the agent is said to have \emph{dynamically inconsistent}
preferences. 
Dyanmic inconsistency is well known to hold for regret.  
Indeed, as Epstein and Le Breton \citeyear{Epstein1993} show,
dynamic inconsistency arises for any non-Bayesian approach to decision making
(i.e., any approach that does not involve maximizing expected utility)
that satisfies certain minimal assumptions.  Not surprisingly, it arises
for MWER as well.  In the rest of this section, we discuss the problem
and some standard approaches to dealing with it, and illustrate some
subtleties that arise in dealing with it in the context of MWER.

To understand the problem in the context of regret,
consider the two-stage decision problem of having dinner, represented as
a decision tree in Figure \ref{fig:inconsistency_restaurant_1}. 
Solid circles denote decision points, and empty circles denote points where nature reveals information to the agent.
The decision tree also includes information about what states are considered possible at each node. 
The set of states considered possible at the root is always the entire state space, and nature's actions at each nature decision point partitions the set of possible states. 

\begin{figure}[h]
	\centering
\includegraphics[width=0.60\textwidth]{} 
	\caption{\label{fig:inconsistency_restaurant_1}Dynamic
	inconsistency example. 
	}\label{fig1}
\end{figure}

First, you have to choose between a Chinese restaurant and an Italian
restaurant.  
Once you have arrived at a particular restaurant, you cannot change your mind and go to another; so in the second stage you must order something from the menu at the chosen restaurant. 
Your utility is a combination of how much you enjoy the food, and whether you get an allergic reaction. 
Initially, you know that there are two possible states: you must be either allergic to MSG (state $m$) or to basil (state $b$), but not both. 
Assume that all Italian foods will have traces of basil, and Chinese
stir-fry has MSG but plain rice does not.  However, you do not enjoy
eating plain rice, so the utility of ordering rice is $0$.

Suppose that you make decisions using the minimax regret decision rule,
viewing a plan (i.e., a strategy) as an act.
A straightforward computation shows that, \emph{ex ante}, going to the
Chinese restaurant and ordering plain rice has the lowest regret ($5$).  
However, if you go to the Chinese restaurant, the choice
of going to the Italian restaurant is now irrelevant.  If we now
compute regret with respect to the menu of the two remaining choices,
then the regret of ordering stir-fry is lower ($2$) than that of
ordering rice ($3$).  
You thus end up ordering the stir-fry.
The plan of going to the Chinese restaurant and ordering plain rice
cannot be carried out.

More generally, dynamic consistency requires that the plan
considered optimal \emph{ex ante} continues to be considered optimal at
any later time.  
As we said earlier, Epstein and Le Breton \citeyear{Epstein1993} show
that dynamic inconsistency will arise for essentially all non-Bayesidan
decision rules.
A standard approach for dealing with this lack of dynamic consistency in
the literature is to consider `sophsticated' agents, who are aware of
the potential for dynamic inconsistency, and thus use backward induction
to determine the feasible plans.  
In the restaurant example, a sophisticated
agent believes correctly that she will prefer stir-fry over rice, once
she is at the Chinese restaurant.  Therefore, she no longer considers
going to the Chinese restaurant and ordering plain rice a viable plan.  
The only feasible options are going to the Italian 
restaurant, or having stir-fry at the Chinese restaurant.

A subtlety arises when trying to apply backward induction to
menu-dependent decision rules: which menu do we use when comparing the
viable plans?  For example, in the restaurant example, do we use the
menu consisting of all three plans, or the menu consisting of just the
viable plans.  It turns out not to matter in this example---with respect
to both menus, going to the Italian restaurant minimizes regret.
%
However, in general, the choice of menu can matter.  Hayashi
\citeyear{Hayashi2009} uses the menu of viable plans in computing for
minimax  expected regret agents in optimal stopping problems, 
but it seems to us that both choices (and perhaps others) can be
justified.  


A second subtlety that arises when considering sophisticated agents:
What choice do they make when they are indifferent between two plans?
Sinischalchi \citeyear{Siniscalchi2011}  axiomatizes \emph{consistent
planning} (with menu-independent preferences over plans), which augments 
backward induction with a tie-breaking assumption. 
This tie-breaking assumption in consistent planning allows an agent
to commit to a plan as long as each stage of the plan is considered to
be one of the best at each local decision node.  

In order to axiomatize consistent planning, Siniscalchi must assume that
the agent has preferences that are more general than preferences over
plans.  Rather, the agent must be assumed to have preferences over
\emph{decision trees} (such as that in Figure~\ref{fig1}).  Plans are the
special case of decision trees with no branching at decision nodes; we can identify a
decision tree with a set of plans (essentially, the branches in the
decision tree).
Sophistication is captured by an axiom that says, roughly, that the
agent is indifferent between a decision tree and the same tree with a
non-optimal (based on backward-induction) plan removed.  
Preferences over decision trees are similar in spirit to preferences
over menus \cite{kreps79}.

If we try to apply Siniscalchi's approach to regret, we encounter
further difficulties.  In a menu-independent setting, we can compare two
decision trees by comparing the best plans in each decision tree (if we
identify a decision tree with a set of plans).  But once menus become
relevant, we must decide what menu to use when making this comparison.  
It is not clear which menu to choose.  What we really have here are
menus over menus; it is not even clear how to apply regret in this
setting.  Defining and axiomatizing consistent planning in a
regret-based setting remains an open problem.

\section{Conclusion}\label{sec:conclusion}
We proposed an alternative belief representation using \emph{weighted
  sets of probabilities}, and described a natural approach to updating
in such a situation and a natural approach to determining the weights.   
We also showed how weighted sets of probabilities can be combined with regret
to obtain a decision rule, MWER, and provided an axiomatization that
characterizes static and dynamic preferences induced by MWER.  

We have considered preferences indexed by menus here.
Stoye~\citeyear{Stoye2011} used a different framework: \emph{choice
functions}.  A choice function maps every finite set $M$ of acts to a subset
$M'$ of $M$.  Intuitively, the set $M'$ consists of the `best' acts in
$M$.  Thus, a choice function gives less information than a preference
order; it gives only the top elements of the preference order.  
The motivation for working with choice functions is that an agent can reveal
his most preferred acts by choosing them when the menu is offered.  In a
menu-independent setting, the agent can reveal his whole preference order;
to decide if $f \succ g$, it suffices to present the agent with a choice
among $\{f,g\}$.  However, with regret-based choices, the menu matters;
the agent's most preferred choice(s) when presented with $\{f,g\}$ might
no longer be the most preferred choice(s) when presented with a larger
menu.
Thus, a whole preference order is arguably not meaningful with
regret-based choices.
Stoye~\citeyear{Stoye2011} provides a representation theorem for MER where the
axioms are described in terms of choice functions.  The axioms that we
have attributed to Stoye are actually the menu-based analogue of his
axioms.  We believe that it should be possible to provide a
characterization of MWER using choice functions, although we have not
yet proved this.

\commentout{
Another issue that must be dealt with when MWER is combined with likelihood
updating is \emph{dynamic inconsistency}.  It is not hard to give
examples of settings where an agent that uses MWER combined with
likelihood updating starts out with one plan, and then changes his
mind.  It is not hard to construct examples where an agent decides on a
plan that says that if he learns $E$, he should perform act $f$, but then
when he actually learns $E$, he performs $f'$ instead.  Such dynamic
inconsistency arises with MMEU when using measure-by-measure updating as
well.  
Siniscalchi \citeyear{Siniscalchi2011} proposes an approach to dealing with
dynamic consistency in the context of MMEU combined with
measure-by-measure updating by using backward induction to decide which
action to take.  As we show in the full paper, this approach can be
applied to ensure dynamic consistency for MWER combined with likelihood
updating as well. 
}
Finally, we briefly considered the issue of dynamic consistency and
consistent planning.  As we showed, making this precise in the context
of regret involves a number of subtleties.  We hope to return to this
issue in future work.



\appendix

\section{Proof of Theorem~\protect{\ref{thm:completeness}}}

We show here that
if a family of menu-dependent preferences $\succeq_M$ satisfies
axioms 1-10, then $\succeq_M$ can be represented as minimizing expected
regret with respect to a set of weighted probabilities and a utility
function. 
Since the proof is somewhat lengthy and complicated, 
we split it into several steps, each in a separate subsection.

\subsection{Simplifying the Problem }
Our proof starts in much the same way as the proof by
Stoye~\citeyear{Stoye2011} of a 
representation theorem for regret.
Lemma \ref{lem:vonNeumann} guarantees the existence of a utility
function $U$ on prizes that can be extended to lotteries in the obvious
way, so that  
$l^* \succeq (l')^*$ iff $U(l) \ge U(l')$. 
In other words, preferences over all constant acts are represented by the maximization of $U$ on the corresponding lotteries that the constant acts map to. 
Lemma~\ref{lem:vonNeumann} is a consequence of standard results.
Our menus are arbitrary sets of acts, as opposed to convex hulls of a
finite number of acts in \cite{Stoye2011}; 
Lemma~\ref{lem:cases} shows that Stoye's technique can be adapted to
work when menus are arbitrary sets of acts. 
Finally, following Stoye \citeyear{Stoye2011}, we 
reduce the proof of existence of a minimax
weighted regret representation for the family $\succeq_M$ to the proof
of existence of a minimax weighted regret representation for a
single menu-independent preference ordering $\succeq$
(Lemma~\ref{lem:stoye}).


\begin{lemma}\label{lem:vonNeumann}
If Axioms 1-3, 5, 7, and 8 hold, then there exists a nonconstant 
function $U : X\rightarrow \R$,
unique up to positive affine transformations,
such that for all constant acts $l^*$ and $(l')^*$ and menus $M$, 
$$ l^* \succeq_M (l')^* \Leftrightarrow \sum_{ 
\{y :\, l^*(y)>0 \}} l(y) U(y) \geq \sum_{ \{y :\, l'(y)>0 \}}
l'(y) U(y) .$$ 
\end{lemma}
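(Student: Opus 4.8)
The plan is to reduce the statement to the classical von Neumann--Morgenstern (vNM) expected utility theorem. Since the equivalence to be proved involves only constant acts, the first step is to use Menu independence for constant acts (Axiom 8) to define a single, menu-free relation $\succeq$ on $\Delta(Y)$: identifying each lottery $l$ with the constant act $l^*$, I set $l \succeq l'$ exactly when $l^* \succeq_M (l')^*$, which by Axiom 8 does not depend on the choice of $M$ (one may always take $M = \{l^*, (l')^*\}$). The goal is then to show that $\succeq$ on $\Delta(Y)$ satisfies the vNM axioms, so that it admits an expected-utility representation by some $U$, and finally to transfer uniqueness and nonconstancy back to the menu-indexed family.

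Second, I would verify the vNM axioms for $\succeq$ one at a time. Transitivity and Completeness (Axioms 1 and 2) give that $\succeq$ is a weak order on $\Delta(Y)$. The key structural observation for the remaining axioms is that a convex combination of constant acts is again constant: $p l^* + (1-p)(l'')^* = (pl + (1-p)l'')^*$. Hence the menu-indexed Independence axiom (Axiom 7), applied with $f = l^*$, $g = (l')^*$, $h = (l'')^*$, together with Axiom 8 to discard the menus $M$ and $pM + (1-p)h$ appearing on the two sides, yields exactly the standard independence condition $l \succeq l'$ iff $pl + (1-p)l'' \succeq pl' + (1-p)l''$. Similarly, restricting Mixture Continuity (Axiom 5) to constant acts and stripping the menu unions via Axiom 8 gives the Archimedean axiom: if $l_1 \succ l_2 \succ l_3$, then $q l_1 + (1-q) l_3 \succ l_2 \succ r l_1 + (1-r) l_3$ for some $q, r \in (0,1)$.

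Third, with $\succeq$ now a weak order on the mixture space $\Delta(Y)$ satisfying independence and continuity, the vNM theorem produces a utility $U : X \to \R$, unique up to positive affine transformations, such that $l \succeq l'$ iff $\sum_y l(y) U(y) \ge \sum_y l'(y) U(y)$; unwinding the identification of $l$ with $l^*$ and reinstating an arbitrary menu via Axiom 8 yields the displayed equivalence. It remains to argue that $U$ is nonconstant, and this is where Nontriviality (Axiom 3) enters: Axiom 3 supplies acts with $f \succ_M g$, from which one must deduce a strict preference between two constant acts, since otherwise $\succeq$ would be trivial on lotteries and $U$ would be constant. I expect this propagation of nontriviality from general acts down to constant acts, using only Axioms 1--3, 5, 7, and 8 (and in particular without Monotonicity), to be the most delicate point; by contrast, the verification of the vNM axioms is routine once the menu bookkeeping through Axiom 8 is handled carefully.
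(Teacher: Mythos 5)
Your proposal follows essentially the same route as the paper: the paper's own proof is a two-sentence appeal to standard vNM results, observing that Axiom 8 collapses the family $\succeq_M$ to a single order on constant acts and that Axiom 7 then yields the standard substitution axiom; you simply carry out the menu bookkeeping that the paper leaves implicit, and that part of your argument is correct.

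The point you flag as delicate---deriving nonconstancy of $U$ from Nontriviality (Axiom 3) without Monotonicity---is not merely delicate; it cannot be done from the listed axioms alone. Consider $S=\{s_1,s_2\}$ and the menu-independent order represented by the mixture-affine functional $v(f)=f(s_1)(y_1)-f(s_2)(y_1)$: it satisfies Axioms 1, 2, 3, 5, 7, and 8 (Nontriviality holds because $v$ separates some non-constant acts), yet $v(l^*)=0$ for every constant act, so the restriction of $\succeq$ to constant acts is total indifference and admits no nonconstant representing $U$. So the lemma as stated has a (minor) misstatement of hypotheses, and your proof cannot close this step as posed. The paper itself silently acknowledges this: the proof of its Lemma~3 derives the existence of prizes $x,y$ with $U(x)>U(y)$ from ``the nontriviality \emph{and monotonicity} axioms.'' The fix is to add Axiom 4 to the hypotheses of the lemma (it is available everywhere the lemma is used), after which your step three goes through: Monotonicity plus Nontriviality forces a strict preference between two constant acts, hence a nonconstant $U$. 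Everything else in your write-up matches the paper's intended argument.
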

\begin{proof}
By menu independence for constant acts, the family of preferences
$\succeq_M$ all agree when restricted to constant acts.
The lemma then follows from standard results (see, e.g.,
\cite{Kreps}), since menu-independence for constant acts, combined
with independence, gives the standard independence (substitution) axiom
from expected utility theory. 
\end{proof}

As is commonly done, given $U$, we define $u(l) = \sum_{\{y:\, l(y) > 0\}}
l(y) U(y)$.  Thus, $u(l)$ is the expected utility of lottery $l$.
We extend $u$ to contsant acts by taking $u(l^*) = u(l)$.
Thus, Lemma~\ref{lem:vonNeumann} says that, for all menus $M$, 
$l^* \succeq (l')^*$ iff $u(l^*) \ge u(l')$.  
If $c$ is the utility of some lottery, let $l^*_c$ be a 
constant lottery that $u(l^*_c) = c$.  The following is now
immediate. We state it as a lemma so that we can refer to it later.
\begin{lemma}\label{property1}
$u(l^*_c) \ge u(l^*_{c'})$ iff $l^*_c \succeq l^*_{c'}$;
similarly, $u(l^*_c) =u(l^*_{c'})$ iff $l^*_c \sim l^*_{c'}$, and
$u(l^*_c) > u(l^*_{c'})$ iff $l^*_c \succ l^*_{c'}$.
\end{lemma}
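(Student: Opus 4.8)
The plan is to derive this lemma directly from Lemma~\ref{lem:vonNeumann} together with the definitions of $\succ$ and $\sim$ in terms of the weak order $\succeq$. First I would recall the convention introduced just above the lemma, namely $u(l^*) = u(l)$ for the underlying lottery $l$, so that for constant acts the value of $u$ on the act and on its lottery coincide. Lemma~\ref{lem:vonNeumann} states that $l^* \succeq_M (l')^* \Leftrightarrow u(l) \ge u(l')$, and menu independence for constant acts lets me suppress the menu. Applying this to the two constant acts $l^*_c$ and $l^*_{c'}$ and rewriting $u(l_c), u(l_{c'})$ as $u(l^*_c), u(l^*_{c'})$ yields the first claimed equivalence immediately: $l^*_c \succeq l^*_{c'}$ iff $u(l^*_c) \ge u(l^*_{c'})$.

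For the indifference and strict versions I would simply unwind the standard abbreviations, invoking Completeness (Axiom~\ref{axiom:c}) so that $\succeq$ is a complete weak order and $\sim$, $\succ$ are genuinely captured by conjunctions and negations of $\succeq$. Since $l^*_c \sim l^*_{c'}$ abbreviates ``$l^*_c \succeq l^*_{c'}$ and $l^*_{c'} \succeq l^*_c$'', applying the first equivalence to each conjunct turns this into ``$u(l^*_c) \ge u(l^*_{c'})$ and $u(l^*_{c'}) \ge u(l^*_c)$'', which holds exactly when $u(l^*_c) = u(l^*_{c'})$. Likewise, $l^*_c \succ l^*_{c'}$ abbreviates ``$l^*_c \succeq l^*_{c'}$ and $\lnot(l^*_{c'} \succeq l^*_c)$''; translating both parts via the first equivalence gives ``$u(l^*_c) \ge u(l^*_{c'})$ and $\lnot(u(l^*_{c'}) \ge u(l^*_c))$'', i.e.\ $u(l^*_c) > u(l^*_{c'})$, as desired.

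Because everything reduces to a single application of Lemma~\ref{lem:vonNeumann} followed by elementary manipulation of the order relations on $\R$, I do not expect any real obstacle here; this is exactly why the text declares it ``immediate.'' The only points requiring minor care are to invoke Completeness so that $\sim$ and $\succ$ are faithfully expressed through $\succeq$, and to keep the convention $u(l^*) = u(l)$ in mind so that the comparisons on constant acts line up with comparisons of their utilities.
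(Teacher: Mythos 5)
Your proposal is correct and matches the paper's (implicit) argument: the paper simply declares the lemma ``immediate'' after Lemma~\ref{lem:vonNeumann}, and your write-up spells out exactly the intended reduction---apply Lemma~\ref{lem:vonNeumann} with the convention $u(l^*)=u(l)$ and menu independence for constant acts, then unwind $\sim$ and $\succ$ via completeness. No gaps.
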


\medskip

The key step in showing that we can reduce to a single menu is to show
that, roughly speaking, for each menu, there exists a menu-dependent
function $g_M$ such that $u(g_M(s)) = -\sup_{f \in M} u(f(s))$.  
Stoye \citeyear{Stoye2011} proved a similar result, but he assumed that
all menus were obtained by taking the convex hull of a finite set of
acts.  Because we allow arbitrary bounded menus, this result is not
quite true for us.  For example, suppose that the range of $u$ is
$(-1,\infty]$.  Then there may be a menu $M$ such that $\sup_{f \in M}
u(f(s)) = 5$, so $-\sup_{f \in M} u(f(s)) = -5$.  But there is no
act $g$ such that $u(g(s)) = -5$, since $u$ is bounded below by $-1$.
The following weakening of this result suffices for our purpose.


\begin{lemma}\label{lem:cases}
There exists a utility function $U$ such that for every menu $M$, 
there exists $\epsilon \in (0,1]$ and constant act $l^*$ such 
that for all $f,g \in M$, $f \succeq_M g \Leftrightarrow t(f)
\succeq_{t(M)} t(g)$, where 
$t$ has the form $t(f) = \epsilon f + (1-\epsilon)l^*$ 
and $t(M)=\{t(f) : f \in M\}$. 
Moreover, there exists an act $g_{t(M)}$ such that 
$u(g_{t(M)}(s) ) = -\sup_{f\in t(M)}u(f(s))$ for all $s \in S$.
\end{lemma}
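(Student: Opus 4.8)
The plan is to split the proof into the two assertions of the lemma and treat them separately. The utility function $U$ is the one supplied by Lemma~\ref{lem:vonNeumann}, and I write $u$ for its extension to lotteries (and constant acts), recalling that $u$ is affine in the lottery-mixing operation, so that $u(\epsilon f(s) + (1-\epsilon)l) = \epsilon u(f(s)) + (1-\epsilon)u(l)$. The first assertion---that $t$ preserves the preference---is immediate, while the second---the existence of $g_{t(M)}$---is where the real work lies; it is handled by choosing the scaling $\epsilon$ and the anchor $l^*$ so that the negated suprema of utilities become achievable as utilities of lotteries.

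For the first assertion, I would apply Independence (Axiom~\ref{axiom:IND}) with the constant act $h = l^*$ and mixing weight $p = \epsilon$. Since $l^*$ is constant, Independence gives directly, for any $\epsilon \in (0,1)$, that $f \succeq_M g$ iff $\epsilon f + (1-\epsilon)l^* \succeq_{\epsilon M + (1-\epsilon)l^*} \epsilon g + (1-\epsilon)l^*$, which is exactly $t(f)\succeq_{t(M)} t(g)$, because $\epsilon M + (1-\epsilon)l^* = t(M)$ by definition of the mixed menu. (The case $\epsilon = 1$ is trivial, with $t$ the identity.) Note that this step works for every $\epsilon \in (0,1]$ and every constant $l^*$; the freedom in choosing them is spent entirely on the second assertion.

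For the second assertion, I first normalize $U$ once and for all. Since there are two prizes of distinct utility, the set $I$ of achievable lottery utilities (the convex hull of $U(Y)$) is an interval with nonempty interior, and since $U$ is determined only up to positive affine transformations I may shift it so that $0 \in \mathrm{int}(I)$; fix $\delta > 0$ with $[-\delta,\delta]\subseteq I$, and take $l^*$ to be a constant act with $u(l^*) = 0$. Writing $s_M(s) = \sup_{f\in M}u(f(s))$, affineness of $u$ gives $\sup_{f\in t(M)} u(f(s)) = \epsilon\, s_M(s)$. The Boundedness axiom (Axiom~\ref{axiom:boundedness}) makes each $s_M(s)$ finite from above, and any fixed member of $M$ bounds it from below, so $s_M(s)$ is finite at every $s$; since $S$ is finite, $B := \max_{s}|s_M(s)| < \infty$. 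Choosing $\epsilon \in (0,1]$ with $\epsilon B \le \delta$ forces $\epsilon\, s_M(s) \in [-\delta,\delta]$, so its negation $-\epsilon\, s_M(s)$ also lies in $[-\delta,\delta]\subseteq I$ and is therefore the utility of some lottery; defining $g_{t(M)}(s)$ to be such a lottery at each state yields an act with $u(g_{t(M)}(s)) = -\sup_{f\in t(M)} u(f(s))$, as required.

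The main obstacle is exactly this last point: for an unshrunk menu the value $-\sup_{f\in M}u(f(s))$ may fall below every achievable utility (as in the range-$(-1,\infty)$ example preceding the lemma), so no witnessing act exists. Centering the range at $0$ and shrinking toward $l^*$ by a small factor $\epsilon$ drives the relevant suprema into the symmetric window $[-\delta,\delta]$, where negation stays inside $I$; this is the one place where convexity of the utility range (hence the two-distinct-prizes assumption) and the Boundedness axiom are both essential. The remaining verifications---that $\epsilon \in (0,1]$, that $l^*$ can be fixed independently of $M$ while only $\epsilon$ varies, and that $g_{t(M)}$ is a legitimate (possibly non-constant) act---are routine.
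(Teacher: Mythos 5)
Your proof is correct, and the core mechanism is the one the paper uses: invoke Independence with a constant anchor act $h=l^*$ and mixing weight $p=\epsilon$ to get preservation of $\succeq_M$ under $t$ for free, and spend all the freedom in choosing $\epsilon$ and $l^*$ on making $-\sup_{f\in t(M)}u(f(s))$ an achievable utility. Where you genuinely diverge is in how the shape of the utility range is handled. The paper splits into four cases according to whether the range of $U$ is bounded above and/or below, normalizing it to $[-1,1]$, $(-\infty,\infty)$, $(-\infty,1]$, or $[-1,\infty)$, and picks a different anchor in each case ($t=$ identity in the first two, $l_1^*$ or $l_{-1}^*$ in the last two). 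You instead observe that a positive affine transformation lets you place $0$ in the interior of the interval $I$ of achievable lottery utilities, anchor at $u(l^*)=0$ so that $u(t(f)(s))=\epsilon\,u(f(s))$ exactly, and then shrink into a symmetric window $[-\delta,\delta]\subseteq I$, which is closed under negation; finiteness of $\max_s|s_M(s)|$ (Boundedness from above, any fixed member of $M$ from below, $S$ finite) then yields a single admissible $\epsilon$. This collapses the case analysis into one uniform argument and makes explicit where the two-distinct-prizes assumption and the Boundedness axiom enter. The only thing to keep in mind is that the paper's subsequent development fixes the normalization so that the range of $u$ contains $[-1,1]$; your $\delta$-window normalization is compatible with this after one further rescaling by $1/\delta$, but you should state that rescaling if you want your version to plug into the rest of the proof unchanged.
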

\begin{proof}
The nontriviality and monotonicity axioms imply there must exist prizes
$x$ and $y$ such that $U(x) > U(y)$.   
We consider four cases.

Case 1: 
The range of $U$ is bounded above and below.  Then we can rescale so
that the range of $U$ is $[-1,1]$.
Thus, there must be prizes $x$ and $y$ such that $U(x)=1$ and $U(y)=-1$. 
For all $c \in [-1,1]$, there must be a prize $x'$ that is a convex
combination of $x$ and $y$ such that $u(x') = c$, so we can clearly define
a function $g_M$ such that, for all $s \in S$, we have 
$u(g_{M}(s) ) = -\sup_{f\in M}u(f(s))$.  
Furthermore, we know that such a $g_M$ exists because it can be formed as an act which maps each state to an appropriate lottery over the prizes $x$ and $y$. 
More generally, we know that an act with a certain utility profile exists if its utility for each state is within the range of $U$.
This fact will be used in the other cases as well.

Thus, in this case we can take $t$ to be the identity (i.e., $\epsilon = 1$).

Case 2: 
The range of $U$ is $(-\infty,\infty)$.  
Again, for all $c \in (\infty,\infty)$, there must exist a prize $x$
such that $u(x) = c$.  Since menus are assumed to be bounded above, we can
again define the required function $g$ and take $\epsilon = 1$.

Case 3: 
The range of $U$ is bounded above and unbounded below.
Then we can assume without loss of generality that the range is
$(-\infty,1]$, and for all $c$ in the range, there is a prize $x$ such
that $u(x) = c$.
For all menus $M$, $\epsilon >0$, and acts $f,
g \in M$, by  Independence,   we have that
$$ f \succeq_M g \Leftrightarrow \epsilon f + (1-\epsilon) l_1^* \succeq_{\epsilon M + (1-\epsilon) l_1^* } \epsilon g + (1-\epsilon) l_1^*.$$
There exists an $\epsilon > 0$ such that for all
$s\in S$,  
$$1 \geq \sup_{f \in M} \epsilon u(f(s)) + (1-\epsilon) \geq -1.$$  
Let $t(f) = \epsilon f + (1-\epsilon) l_1^*$.
Clearly there exists an act $g_{t(M)}$ such that 
$ u(g_{t(M)}(s) ) = -\sup_{f\in t(M)}u(f(s))$ for all $s \in S$.

Case 4: 
The range of $U$ is bounded below and unbounded above.
By the upper-boundedness axiom, every menu has an upper bound on its
utility range.  
Therefore, for every menu $M$, $\epsilon >0$, and all acts $f$ and $g$
in $M$, by Independence,
$$ f\succeq_M g \Leftrightarrow \epsilon f + (1-\epsilon) l_{-1}^*
\succeq_{\epsilon M + (1-\epsilon) l_{-1}^*} \epsilon g + (1-\epsilon)
l_{-1}^*. $$ 
There exists $\epsilon >0$ such that for all
$s\in S$, 
$$\sup_{f \in M} \epsilon u(f(s)) + (1-\epsilon) u(l_{-1}^*(s) ) \leq 1.$$ 
Let $t(f) =  \epsilon f + (1-\epsilon)
l_{-1}^*$.  Again, it is easy to see that $g_{t(M)}$ exists.
\end{proof}

In light of Lemma~\ref{lem:cases}, we henceforth assume that the utility
function $u$ derived from $U$ is such that its range is either
$(-\infty,\infty)$, $[-,1,1]$, $(-\infty,1]$, or $[-1,\infty)$.  In
any case, its \emph{range always includes }$[-1,1]$.

Before proving the key lemma, we establish some useful notation for acts
and utility acts.  
Given a utility act $b$, let $f_b$, the act corresponding to $b$, be the
act such that $f_b(s) = l_{b(s)}$, if such an act exists.
Conversely, let $b_f$, the utility act corresponding to the act $f$, be
defined by taking $b_f(s) = u(f(s))$.   
Note that monotonicity implies that if $f_b = g_b$, then $f \sim_M g$
for all menus $M$. 
That is, only utility acts matter.
If $c$ is a real, we take $c^*$ to be the constant utility act such
that $c^*(s) = c$ for all $s \in S$. 

\begin{lemma}\label{lem:stoye} Let $M^*$ be the menu
consisting of all acts $f$ such that $(-1)^* \leq b_f \le 0^*$.  
Then $(U,\cP^+)$ represents $\succeq_{M^*}$ (i.e., $\succeq_{M^*} =
\succeq_{M^*,\cP^+}^{S,X,U})$ iff $(U,\cP^+)$ represents $\succeq_M$ for
all menus $M$. 
\end{lemma}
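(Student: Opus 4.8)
The plan is to prove the two directions separately, with essentially all the work in the ``only if'' direction. The ``if'' direction is immediate: if $(U,\cP^+)$ represents $\succeq_M$ for every menu $M$, then in particular it represents $\succeq_{M^*}$. For the converse, the key observation is that the family $\succeq_{M,\cP^+}^{S,Y,U}$ determined by $(U,\cP^+)$ also satisfies Axioms~\ref{axiom:T}--\ref{axiom:boundedness} (the easy half of Theorem~\ref{thm:completeness}). Thus I have two families of preference orders, $\succeq_M$ and $\succeq_{M,\cP^+}^{S,Y,U}$, both satisfying Axioms~\ref{axiom:T}--\ref{axiom:boundedness}, which by hypothesis agree on the single menu $M^*$; the goal is to show they agree on every menu. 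Since $M^*$ consists of all acts whose utility profile lies in $[-1,0]^S$, its upper envelope $\sup_{h\in M^*}u(h(s))$ is identically $0$, so regret on $M^*$ is simply $\regret_{M^*}(h,s)=-u(h(s))$. The strategy is to use only the axioms (which both families obey) to rewrite an arbitrary comparison $f\succeq_M g$ as a comparison of two suitable acts inside $M^*$; agreement on $M^*$ then forces agreement on $M$.

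The reduction proceeds in three steps, following Stoye. First, apply Lemma~\ref{lem:cases} to obtain a transformation $t(f)=\epsilon f+(1-\epsilon)l^*$ with $f\succeq_M g \Leftrightarrow t(f)\succeq_{t(M)} t(g)$, together with an act $g_{t(M)}$ satisfying $u(g_{t(M)}(s))=-\sup_{h\in t(M)}u(h(s))$. Second, mix with $g_{t(M)}$: by Independence (Axiom~\ref{axiom:IND}) with $p=\tfrac12$, $t(f)\succeq_{t(M)}t(g)$ iff $\tfrac12 t(f)+\tfrac12 g_{t(M)}\succeq_{M''}\tfrac12 t(g)+\tfrac12 g_{t(M)}$, where $M''=\tfrac12 t(M)+\tfrac12 g_{t(M)}$. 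A direct computation shows that the upper envelope of $M''$ is identically $0$ and that the image $\hat f:=\tfrac12 t(f)+\tfrac12 g_{t(M)}$ has utility profile $-\tfrac12\regret_{t(M)}(f,s)\in[-1,0]$ (shrinking $\epsilon$ in the first step if necessary to keep the profile in range), so $M''\subseteq M^*$. Because every step here is an equivalence justified purely by the axioms, it holds verbatim for both families; hence it suffices to show that the two families agree on the submenu $M''$. Third, I would bridge $M''$ and $M^*$ using INA (Axiom~\ref{axiom:INA}): since $0^*\in M^*$ attains the envelope everywhere, every act of $M^*$ is never strictly optimal relative to any submenu that attains the envelope, so INA identifies the two orders on the acts of interest, and agreement on $M^*$ transfers to $M''$ and thence to $M$.

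The main obstacle is precisely the third step, and it is the point at which the generalization from Stoye's setting (menus that are convex hulls of finitely many acts) to arbitrary bounded menus bites. When the menu is finitely generated the supremum defining the upper envelope is attained in every state, so $M''$ itself attains its envelope and the passage to $M^*$ via INA is immediate. For a general bounded menu the envelope need not be attained: if $\sup_{h\in M''}u(h(s_0))=0$ is not achieved at some state $s_0$, then the acts of $M^*$ whose profile touches $0$ at $s_0$ are strictly optimal relative to $M''$, so INA does not directly license adding them. I expect to resolve this by approximation: the constant acts $(-\delta)^*$ for $\delta>0$ are never strictly optimal relative to $M''$ (the envelope value $0$ strictly exceeds $-\delta$), hence addable by INA without changing preferences, and one then pushes $\delta\to 0$ using Mixture Continuity (Axiom~\ref{axiom:arch}) together with Monotonicity (Axiom~\ref{axiom:M}) to recover the comparison in the full menu $M^*$. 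Managing this limiting argument cleanly, while keeping all intermediate acts within the range of $u$ guaranteed by Lemma~\ref{lem:cases}, is the delicate heart of the proof; once it is in place, the agreement of the two families on $M^*$ propagates through the entire reduction to every menu $M$.
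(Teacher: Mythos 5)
Your proposal follows essentially the same route as the paper's proof: apply Lemma~\ref{lem:cases} to normalize the menu, use Independence with $p=\tfrac12$ to mix with $g_{t(M)}$ so that the resulting menu sits inside $M^*$ with upper envelope $0$, and invoke INA to pass between that submenu and $M^*$, noting that the whole chain of equivalences applies verbatim to the MWER family since it satisfies the same axioms. The one place you go beyond the paper is the attainment worry in the third step: the paper simply asserts that at every state some act of $\tfrac12 t(M)+\tfrac12 g_{t(M)}$ has utility $0$ (``namely, the mixture that involves $\argmax_{f\in M}u(f(s))$''), tacitly assuming the supremum is achieved, so your concern is a legitimate flag on the paper's own argument rather than a divergence from it; be aware, though, that your sketched fix of adjoining the constant acts $(-\delta)^*$ and letting $\delta\to 0$ does not yet explain how to add the acts of $M^*$ whose utility profile equals $0$ at a state where the envelope of the intermediate menu is not attained, so that step still needs to be worked out if you want a proof valid for genuinely arbitrary bounded menus.
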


\begin{proof}
Our arguments are similar in spirit to those of Stoye \citeyear{Stoye2011}.

By Lemma~\ref{lem:cases}, there exists $t$ such that $t(f) = \epsilon f
+ (1-\epsilon)h$ for a constant function $h$ such that 
\begin{align*}
f\succeq_M g \mbox{ iff }  t(f) \succeq_{t(M)} t(g); 
\end{align*}
moreover, for this choice of $t$, the act $g_{t(M)}$ defined in
Lemma~\ref{lem:cases} exists.

By Independence,
$$t(f) \succeq_{t(M)} t(g) \mbox{ iff } 
\frac{1}{2}t(f)+\frac{1}{2}g_{t(M)} \succeq_{\frac{1}{2}t(M)
+\frac{1}{2}g_{t(M)}} \frac{1}{2}t(g)+\frac{1}{2}g_{t(M)}. 
$$

Let $M^*$ be the menu that contains all acts with utilities in $[-1,0]$.
By {INA}, we know that for all acts $f$ and $g$, and menus $M$
for which $g_M$ is defined, we have
\begin{align*}
f\succeq_M g
\mbox{ iff }
\frac{1}{2}f+\frac{1}{2}g_M \succeq_{M^*} \frac{1}{2}g+\frac{1}{2}g_M.
\end{align*}
This is because acts of the form $\frac{1}{2}f+\frac{1}{2}g_M$
are never strictly optimal with respect to
the menu $\frac{1}{2}M + \frac{1}{2}g_M$.  
At every state there must be some act in $\frac{1}{2}M+\frac{1}{2}g_M$
that has utility $0$ (namely, the mixture that involves the act
$\argmax_{f\in M}u(f(s))$. 

Thus, 
$$f\succeq_M g \mbox{ iff } 
\frac{1}{2}t(f)+\frac{1}{2}g_{t(M)} \succeq_{M^*} \frac{1}{2}t(g)+\frac{1}{2}g_{t(M)}. 
$$

%
Since the MWER representation also satisfies Independence and
INA, we know that for all menus $M$, and acts $f$ and $g$ in $M$, 
$$ f \succeq_{M,\cP^+}^{S,X,U} g \Leftrightarrow t(f)
\succeq_{t(M),\cP^+}^{S,X,U} t(g)  \Leftrightarrow \frac{1}{2}t(f) +
\frac{1}{2}g_{t(M)} \succeq_{M^*,\cP^+}^{S,X,U} \frac{1}{2}t(g) +
\frac{1}{2}g_{t(M)}.$$ 

Therefore, to show that $\succeq_M$ has a MWER representation with
respect to $(U,\cP^+)$, it suffices to show that $\succeq_{M^*}$ has a
MWER representation with respect to $(U,\cP^+)$.  
\end{proof}

In the sequel, we drop the menu
subscript when we refer to the family of preferences,  
and just write $\succeq$ (to denote $\succeq_{M^*}$);
by Lemma~\ref{lem:stoye}, it suffices to consider $\succeq_{M^*}$.

\subsection{Defining a functional on utility acts}

As we said, Stoye \citeyear{StoyeRegret} also started his proof of a
representation theorem for MER by reducing to a single preference order
$\succeq_{M^*}$.  He then noted that, 
the expected regret of an act $f$ with respect to a probability
$\Pr$ and menu $M^*$ is just the negative of the expected utility of
$f$.  Thus, the 
worst-case expected regret of $f$ with respect to a set $\cP$ of
probability measures is the negative of the worst-case expected utility
of $f$ with respect to $\cP$.  Thus, it sufficed for Stoye to show that 
$\succeq_{M^*}$ had an MMEU representation, which he did by showing that
$\succeq_{M^*}$ satisfied Gilboa and Schmeidler's \citeyear{GS1989}
axioms for MMEU, and then appealing to their representation theorem.

This argument does not quite work for us, because now $\succeq_{M^*}$
does not satisfy the C-independence axiom.  (This is because
our preference order
$\succeq_{M^*}$ is based on \emph{weighted} regret, not regret.)
However, we can get a representation theorem for weighted regret by
using some of the techniques used by Gilboa and Schmeidler to get a
representation theorem for MMEU, appropriately modified to deal with
lack of C-independence.
Specifically, like Gilboa and Schmeidler, we define a functional $I$ on
utility acts such that the preference order on utility acts is
determined by their value according to $I$ (see Lemma~\ref{lem:Iprop}). 
Using $I$, we can then determine the weight of each probability in
$\Delta(S)$, and prove the desired representation theorem.

Recall that $u$ represents $\succeq$ on constant acts, and that only
utility acts matter to $\succeq$.   
%
The space of all utility acts is the Banach space $\cB$ of real-valued
functions on $S$. 
Let $\cB^-$ be the set of nonpositive
functions in $\cB$, where the function $b$ is nonpositive if $b(s) \le 0$ for all $s
\in S$.    

We now define a functional $I$ on utility acts in $\cB^-$ such that
for all $f,g$ with $b_f, b_g \in \cB^-$, we have
$I(b_f)\geq I(b_g)$ iff $f\succeq g$. 
Let
$$R_f = \{\alpha': l_{\alpha'}^* \succeq f\}.$$
If $0^* \ge b \ge (-1)^*$, then $f_b$ exists, and we define
$$I(b) = \inf(R_{f_b}).$$ 
For the remaining $b\in \cB^-$, we extend $I$ by homogeneity.
Let $||b|| = |\min_{s \in S}b(s)|$.  Note that if $b \in \cB^-$, then
$0^* \ge b/||b|| \ge (-1)^*$, so we define
$$I(b) =
||b|| I(b/||b||).$$

\commentout{
One useful consequence of this definition is that,
for $\alpha $ and $\beta$ in the range of $u$, we have 
\begin{align}
 \alpha > \beta \text{ if and only if } l^*_\alpha \succ l^*_\beta.
\label{property1} 
\end{align}
(\ref{property1}) follows from monotonicity and the definition of $l^*_c.$

\begin{align}
 \text{For any act } f, \text{ if } \alpha \in R_f \text{ and } \beta
\geq \alpha \text{ and $\beta$ is in the range of $u$, then } \beta\in
 \geq \alpha \text{ then } \beta\in
 R_f. \label{property2} 
 \end{align}
(\ref{property2}) follows from transitivity and the above property.

\begin{align}
\text{For } q\in (0,1), ql^*_0 + (1-q)l^*_c \sim l^*_{(1-q)c}.
\label{property3} 
\end{align}
This is because the constant act on the left hand side has the same
utility profile as the constant act on the right hand side. 
Similarly, for $q\in (0,1)$, $ql^*_{-1} + (1-q)l^*_c \sim l^*_{(1-q)c-q}$. 
}

\begin{lemma}\label{lem:key}
If $b_f \in \cB^-$, then $f \sim l_{I(b_f)}^*$.  
\end{lemma}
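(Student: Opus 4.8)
The plan is to prove the result first for acts whose utility profile lies in $[-1,0]^S$ (the base case, where by definition $I(b_f)=\inf R_f$) and then bootstrap to all of $\cB^-$ by the positive homogeneity built into $I$. Throughout I use that, by Monotonicity (Axiom~\ref{axiom:M}), only the utility profile matters, and that by Lemma~\ref{property1} constant acts are ordered by utility, so $l^*_\alpha \succeq l^*_\beta$ iff $\alpha\ge\beta$. This shows at once that $R_f=\{\alpha: l^*_\alpha\succeq f\}$ is upward closed (if $\alpha\in R_f$ and $\beta\ge\alpha$ then $l^*_\beta\succeq l^*_\alpha\succeq f$ by Transitivity), hence an interval $[c,\infty)$ or $(c,\infty)$ with $c=\inf R_f$. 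For $f\in M^*$, Monotonicity gives $l^*_0\succeq f\succeq l^*_{-1}$, so $0\in R_f$ and $R_f$ is bounded below by $-1$; thus $c=I(b_f)$ exists with $c\in[-1,0]$, and $l^*_c\in M^*$.

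For the base case I rule out both strict preferences using Mixture Continuity (Axiom~\ref{axiom:arch}); the key enabling observation is that $M^*$ is closed under convex combinations of its own elements, so every mixture below stays in $M^*$ and the menu appearing in Axiom~\ref{axiom:arch} remains $M^*$. Suppose first $l^*_c\succ f$. For any $\beta$ with $-1\le\beta<c$ we have $\beta\notin R_f$, so by Completeness (Axiom~\ref{axiom:c}) $f\succ l^*_\beta$; then $l^*_c\succ f\succ l^*_\beta$ and Mixture Continuity yields $q\in(0,1)$ with $l^*_{qc+(1-q)\beta}=q\,l^*_c+(1-q)\,l^*_\beta\succ f$, contradicting $c=\inf R_f$ since $qc+(1-q)\beta<c$. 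Hence $f\succeq l^*_c$. Conversely, suppose $f\succ l^*_c$. Every $\gamma\in(c,0]$ lies in $R_f$, so $l^*_\gamma\succeq f$; moreover $l^*_\gamma\sim f$ for at most one $\gamma$ (two such values would force $l^*_\gamma\sim l^*_{\gamma'}$, hence $\gamma=\gamma'$ by Transitivity and Lemma~\ref{property1}), so we may choose $\gamma\in(c,0]$ with $l^*_\gamma\succ f$. Then $l^*_\gamma\succ f\succ l^*_c$, and Mixture Continuity gives $r\in(0,1)$ with $f\succ r\,l^*_\gamma+(1-r)\,l^*_c=l^*_{r\gamma+(1-r)c}$; but $r\gamma+(1-r)c>c$ forces $r\gamma+(1-r)c\in R_f$, a contradiction. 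Therefore $l^*_c\succeq f$ as well, and $f\sim l^*_c=l^*_{I(b_f)}$.

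For general $b_f\in\cB^-$, let $\lambda=\|b_f\|$. If $\lambda\le 1$ then $b_f\in[-1,0]^S$ and we are in the base case; if $\lambda=0$ then $b_f=0^*$ and $f\sim l^*_0$ with $I(0^*)=0$ directly from Monotonicity. If $\lambda>1$, then by Lemma~\ref{lem:cases} the range of $u$ must be unbounded below, so $l^*_{I(b_f)}$ genuinely exists. Set $p=1/\lambda$ and apply Independence (Axiom~\ref{axiom:IND}) with the constant act $h=l^*_0$: the act $pf+(1-p)l^*_0$ has profile $p\,b_f$ with $\|p\,b_f\|=1$, so the base case gives $pf+(1-p)l^*_0\sim l^*_{I(p b_f)}$, while $I(p b_f)=p\,I(b_f)$ by the positive homogeneity of $I$. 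Since $p\,l^*_{I(b_f)}+(1-p)\,l^*_0=l^*_{pI(b_f)}$, Independence lets us strip off the common mixing with $l^*_0$ and conclude $f\sim l^*_{I(b_f)}$.

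The main obstacle is the base-case squeeze. Because Axiom~\ref{axiom:arch} is only an Archimedean (not a topological) continuity axiom, the indifference $f\sim l^*_c$ cannot be obtained by a limiting argument and must instead be extracted by the two contradictions above, each pinning a mixed constant act strictly on the wrong side of $c$. The delicate point there is arranging $l^*_\gamma\succ f$ for $\gamma$ arbitrarily close to $c$ from above, which relies essentially on the fact (Lemma~\ref{property1}) that indifference with a constant act can hold at only one utility level. A secondary care point is keeping every mixture inside $M^*$ so that Axiom~\ref{axiom:arch} applies with the menu unchanged, and, in the homogeneous extension, matching the scaling on $f$ and on the target constant act by mixing both against the same $l^*_0$ so that Independence applies cleanly; the only remaining bookkeeping is to transfer the base-case indifference to the menu produced by Independence, which is handled by restricting attention to a two-act menu and invoking INA.
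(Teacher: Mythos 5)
Your base case is essentially the paper's own argument: a two-sided contradiction in which Mixture Continuity, applied to a chain of the form $l^*_{c'} \succ f \succ l^*_{c''}$, produces a constant act whose utility sits on the wrong side of $\inf R_f$, contradicting the upward-closedness of $R_f$. The paper anchors the upper comparison at $l^*_0$ (treating $f \sim l^*_0$ separately) and the lower one at $l^*_{-1}$, whereas you anchor at a $\gamma$ just above $c$ (via the at-most-one-indifference-level observation) and at a $\beta$ just below $c$; these are interchangeable. Where you genuinely add something is the case $\|b_f\|>1$: the paper's proof silently treats $I(b_f)$ as $\inf R_f$ throughout, even though for $\|b_f\|>1$ the functional $I$ is \emph{defined} by homogeneity rather than as an infimum (and $f\notin M^*$, so $\succeq_{M^*}$ is not even literally defined on $f$); your reduction via Independence with $h=l^*_0$ supplies the missing justification. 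Two small repairs are needed. First, when $c=-1$ your range of admissible $\beta$ is empty, but that subcase is dispatched immediately by Monotonicity, since $b_f\ge(-1)^*$ gives $f\succeq l^*_{-1}=l^*_c$, contradicting $l^*_c\succ f$. Second, your final bookkeeping step is wrong as stated: Axiom~\ref{axiom:INA} cannot prune $M^*$ down to the two-act menu $\{pf+(1-p)l^*_0,\,l^*_{pI(b_f)}\}$, because the discarded acts of $M^*$ are \emph{not} never strictly optimal relative to that pair (both of its members can have strictly negative utility in a state where some other act in $M^*$ attains $0$). The fix is to keep $l^*_0$ (or any act attaining the state-wise supremum of $M^*$) in the submenu, after which every act in $\cB^-\cap M^*$ is indeed never strictly optimal relative to it and INA applies; this is exactly the device the paper uses with $g_{t(M)}$ in Lemma~\ref{lem:stoye}.
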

\begin{proof}
Suppose that $b_f \in \cB^-$ and, by way of contradiction, that 
$l_{I(b_f)}^* \prec f$.
If $f\sim l_0^*$, then it must be the case that $I(b_f)=0$, since
$I(b_f)\leq 0$ by definition of $\inf$, and $f \sim l_0^* \succ
l_{\epsilon}^*$ for all $\epsilon < 0$ by Lemma~\ref{property1},
so $I(b_f) > \epsilon$ for all $\epsilon < 0$.
Therefore, $f \sim l_{I(b_f)}^*$. 
Otherwise, since $b_f \in \cB^-$, by monotonicity, we must have $l_0^*
\succ f$, and thus 
$l_0^* \succ f \succ l_{I(b_f)}^*$. 
By mixture continuity, there is some $q\in (0,1)$ such that 
$ q\cdot l_0^* + (1-q) \cdot l_{I(b_f)}^* \sim l_{(1-q)I(b_f)} \prec f
$, contradicting the fact that $I(b)$ is the greatest lower bound of $R_f.$ 

If, on the other hand, $l^*_{I(b_f)} \succ f$, then $l^*_{I(b_f)} \succ
f \succeq l^*_{\underline{c}}$ for some $\underline{c}\in \R$. 
If $f \sim l^*_{\underline{c}}$ then it must be the case that $I(b_f)=\underline{c}$. 
$I(b_f) \leq \underline{c}$ since $l^*_{\underline{c}}\succeq
l^*_{\underline{c}}$, and $I(b_f) \geq \underline{c}$ since for all $c'
< \underline{c}$, $l^*_{c'} \prec f \sim l^*_{\underline{c}}$. 

Otherwise, $l^*_{I(b_f)} \succ f \succ l^*_{\underline{c}}$, and by mixture continuity, there is some $q\in (0,1)$ such that 
$q\cdot l^*_{I(b_f)} + (1-q) l^*_{\underline{c}} \succ f$.
Since $qI(b_f) + (1-q)\underline{c} < I(b_f)$, this contradicts 
the fact that $I(b_f)$ is a lower bound of $R_{f}$. 
Therefore, it must be the case that $l^*_{I(b_f)} \sim f$.
\end{proof}

We can now show that $I$ has the required property.
\begin{lemma}\label{lem:Iprop}
For all acts $f,g$ 
such that $b_f, b_g \in \cB^-$,
 $f \succeq g$ iff $I( b_f ) \ge I( b_g )$.
\end{lemma}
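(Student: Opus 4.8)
The plan is to reduce the comparison of $f$ and $g$ to a comparison of two constant acts, to which the preceding lemmas apply directly. Since $b_f, b_g \in \cB^-$, Lemma~\ref{lem:key} immediately yields $f \sim l^*_{I(b_f)}$ and $g \sim l^*_{I(b_g)}$; in particular, both constant acts $l^*_{I(b_f)}$ and $l^*_{I(b_g)}$ exist, as is implicit in the statement of Lemma~\ref{lem:key}.

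Next I would chain these two indifferences with the preference between $f$ and $g$. By Transitivity (Axiom~\ref{axiom:T}) together with Completeness (Axiom~\ref{axiom:c}), from $f \sim l^*_{I(b_f)}$ and $g \sim l^*_{I(b_g)}$ we obtain that $f \succeq g$ holds iff $l^*_{I(b_f)} \succeq l^*_{I(b_g)}$. This is the only step that invokes the weak-order properties of $\succeq$, and it is entirely routine.

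Finally, I would translate the preference between the two constant acts into an inequality on utilities. By Lemma~\ref{property1}, $l^*_{I(b_f)} \succeq l^*_{I(b_g)}$ holds iff $u(l^*_{I(b_f)}) \ge u(l^*_{I(b_g)})$; and since $u(l^*_c) = c$ by the definition of $l^*_c$, this is exactly $I(b_f) \ge I(b_g)$. Composing the two equivalences gives $f \succeq g$ iff $I(b_f) \ge I(b_g)$, which is the claim.

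There is essentially no obstacle here: all the genuine work has already been carried out in establishing Lemma~\ref{lem:key}, namely the identification of each nonpositive utility act with an equivalent constant act. The present lemma is just the bookkeeping step that repackages that identification as a statement about the functional $I$, using only that $\succeq$ restricted to constant acts is a weak order that agrees with $u$.
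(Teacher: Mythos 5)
Your proof is correct and follows exactly the same route as the paper's: apply Lemma~\ref{lem:key} to replace $f$ and $g$ by the equivalent constant acts $l^*_{I(b_f)}$ and $l^*_{I(b_g)}$, chain the indifferences via the weak-order axioms, and finish with Lemma~\ref{property1}. No differences worth noting.
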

\begin{proof}
Suppose that $b_f, b_g \in \cB^-$. By Lemma~\ref{lem:key}, 
$l^*_{I(b_f)} \sim f$ and $g \sim  l^*_{I(b_g)}$.  Thus, 
$f \succeq g$ iff  $l^*_{I(b_f)} \succeq l^*_{I(b_g)}$, and by
Lemma~\ref{property1}, 
$l^*_{I(b_f)} \succeq l^*_{I(b_g)}$ iff 
$I(b_f)\geq I(b_g)$. 
\end{proof}


In order to invoke a standard separation result for Banach spaces, we
extend the definition of $I$ to the Banach space $\cB$.  
We extend $I$ to
$\cB$ by taking $I(b) = I(b^-)$ for $b \in \cB-\cB^-$, where for all $b \in \cB$, $b^-$ is defined as
$$b^-(s) = \begin{cases}
b(s) \text{, if } b(s) \le 0, \\
0, \text{ if } b(s) > 0.\end{cases}$$
Clearly $b^- \in \cB^-$ and $b=b^-$ if $b \in \cB^-$.

We show that the axioms guarantee that $I$ has a number of standard properties.
Since we have artificially extended $I$ to $\cB$, our arguments require
more cases than those in \cite{GS1989}.  
(We remark that such an ``artificial'' extension seem unavoidable in our
setting.)  
Moreover, we must work harder to get the result that we want.
We need different arguments from that for MMEU \citeyear{GS1989}, since
the preference order induced by MMEU satisfies C-independence, while 
our preference order does not.

\begin{lemma}\label{lem:Iproperties}
\begin{enumerate}
\item[(a)] If $c \le 0$, then $I(c^*)=c$.
\item[(b)] $I$ satisfies positive homogeneity: if $b \in \cB$ and $c > 0$,
then $I(cb) = cI(b)$. 
\item[(c)] $I$ is monotonic: if $b, b' \in \cB$ and $b \ge b'$, then $I(b)
\ge I(b')$.
\item[(d)] $I$ is continuous: if $b, b_1, b_2, \ldots \in \cB$, and $b_n
\rightarrow b$, then $I(b_n) \rightarrow I(b)$.
\item[(e)] $I$ is superadditive: if $b, b' \in \cB$, then $I(b+b') \ge
I(b) + I(b')$.
\end{enumerate}
\end{lemma}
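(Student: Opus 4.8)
The plan is to prove the five parts in the order (a), (b), (c), then the superadditivity (e), and to deduce continuity (d) last, since the cleanest route to (d) is through the concavity that (b) and (e) provide. Parts (a) and (b) are bookkeeping about the construction of $I$. For (a), if $c\in[-1,0]$ then $f_{c^*}=l^*_c$, so $R_{f_{c^*}}=\{\alpha':\alpha'\ge c\}$ by Lemma~\ref{property1}, and $I(c^*)=\inf R_{f_{c^*}}=c$; for $c<-1$ the homogeneous extension gives $c^*/|c|=(-1)^*$, so $I(c^*)=|c|\,I((-1)^*)=-|c|=c$. For (b), the case $b\in\cB^-$ is immediate from the defining relation $I(b)=|\min_s b(s)|\cdot I(b/|\min_s b(s)|)$ (with a short check that scaling within the base region $[-1,0]^S$ is respected, using that $cf_b+(1-c)l^*_0$ has utility act $cb$), while for $b\notin\cB^-$ and $c>0$ one has $(cb)^-=c\,b^-$, whence $I(cb)=I(c\,b^-)=c\,I(b^-)=c\,I(b)$.

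For (c) I would first establish monotonicity on the base region: if $0^*\ge b\ge b'\ge(-1)^*$ then $u(f_b(s))=b(s)\ge b'(s)=u(f_{b'}(s))$ for every $s$, so $(f_b(s))^*\succeq(f_{b'}(s))^*$ by Lemma~\ref{property1}, and the Monotonicity axiom (Axiom~\ref{axiom:M}) gives $f_b\succeq f_{b'}$, hence $I(b)\ge I(b')$ by Lemma~\ref{lem:Iprop}. I then extend to all of $\cB^-$ by scaling both arguments down into $[-1,0]^S$ and invoking positive homogeneity, and finally to all of $\cB$ by checking pointwise that $b\ge b'$ forces $b^-\ge(b')^-$, so that $I(b)=I(b^-)\ge I((b')^-)=I(b')$.

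The heart of the argument is (e), where the absence of C-independence is felt most sharply. In the Gilboa--Schmeidler setting, C-independence yields constant-additivity $I(b+c^*)=I(b)+c$, which trivializes both superadditivity and continuity; here there is no such identity. Instead I would extract concavity from Ambiguity Aversion (Axiom~\ref{axiom:AA}) together with positive homogeneity, by a normalization device that plays the role of C-independence. Concretely, for $x,y\in\cB^-$ with $I(x),I(y)<0$, rescale each (by positive homogeneity, choosing scalings that keep the acts realizable in $[-1,0]^S$) to acts of a common $I$-value $v<0$; these acts are $\sim$-indifferent by Lemma~\ref{lem:key}, so Ambiguity Aversion gives $I$ of their mixture at least $v$, and choosing the mixture weight so that the mixture is proportional to $x+y$ and clearing the scalar by homogeneity yields exactly $I(x+y)\ge I(x)+I(y)$. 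Extending from $\cB^-$ to all of $\cB$ then uses the pointwise inequality $(b+b')^-\ge b^-+(b')^-$ together with monotonicity (c), reducing the general claim to the $\cB^-$ case.

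Finally, (d) follows from (b) and (e): a finite, positively homogeneous, superadditive functional is concave, and a finite concave functional on the finite-dimensional space $\cB$ is continuous on the interior of its domain; continuity at the remaining points is recovered from the monotone sandwich $I(b-\epsilon^*)\le I(b_n)\le I(b+\epsilon^*)$, valid once $|b_n(s)-b(s)|\le\epsilon$ for all $s$, by letting $\epsilon\to 0$. I expect the genuine obstacle to be the boundary set $\{I=0\}$ in part (e): the normalization device degenerates when $I(x)$ or $I(y)$ vanishes, and one cannot simply perturb and pass to a limit without already having continuity. Disentangling this requires showing that strictly negative utility acts have $I<0$ (using Nontriviality, Axiom~\ref{axiom:nt}, to rule out every state being null, so that the interior of $\cB^-$ sits inside $\{I<0\}$), and then pushing concavity and continuity out to the boundary by monotonicity together with Mixture Continuity (Axiom~\ref{axiom:arch}); this is the delicate step that the lack of C-independence forces on us.
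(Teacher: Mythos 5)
Your proposal is correct and, for parts (a)--(c) and the main case of (e), follows the paper's proof essentially verbatim: the same homogeneous extension for (a), the same Independence-based mixture with $l_0^*$ for (b), the same pointwise reduction $b\ge b'\Rightarrow b^-\ge (b')^-$ for (c), and the same Ambiguity-Aversion normalization trick (rescale to a common $I$-value, mix with weights proportional to $|I(x)|,|I(y)|$, clear the scalar by homogeneity) for (e). The one structural difference is the order of (d) and (e). The paper proves continuity \emph{first}, directly: it sandwiches $I(b-(1/k)^*)\le I(b_n)\le I(b+(1/k)^*)$ by monotonicity and then shows $I(b\pm(1/k)^*)\to I(b)$ via Mixture Continuity; it then disposes of the degenerate case $I(b)=0$ or $I(b')=0$ in (e) by perturbing to $b-(1/n)^*$ and invoking the already-established continuity. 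You invert this: prove (e) where $I<0$, observe that positive homogeneity plus superadditivity gives concavity and hence continuity on the interior, and only then push to the boundary. Your route buys a cleaner derivation of (d) on the interior (concave finite functions on $\R^{|S|}$ are continuous there), but as you correctly diagnose, it does not eliminate the delicate step --- extending continuity along the constant direction to boundary points still requires exactly the paper's Mixture Continuity argument, so the two proofs rest on the same technical core and neither is shorter. Your identification of the $\{I=0\}$ degeneracy as the point where the absence of C-independence bites is precisely where the paper also has to work hardest.
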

\begin{proof} For part (a), 
If $c$ is in the range of $u$,  then it is immediate from the defintion
of $I$ and Lemma~\ref{property1} that $I(c^*) = c$.  
If $c$ is not in the range of $u$, then
since $[-1,0]$ is a subset of the range of $u$, we must have $c < -1$,
and by definition of $I$, we have $I(c^*) = |c| I(c^*/|c|) = c$.

For part (b), first suppose that $||b|| \le 1$ and $b \in \cB^-$ (i.e.,
$0^* \ge b \ge (-1)^*$).   Then there exists an act $f$ such that $b_f = b$.  
By Lemma~\ref{lem:key}, $f \sim l^*_{I(b)}$. 
We now need to consider the case that $c \le 1$ and $c > 1$ separately.
If $c \le 1$, by Independence,
$c f_b + (1-c) l_0^* \sim c l^*_{I(b)} + (1-c) l_0^*$. 
By Lemma~\ref{lem:Iprop}, $I(b_{c f_b + (1-c) l_0^*}) =  I(b_{c l^*_{I(b)}
+ (1-c) l_0^*})$.  It is easy to check that  $b_{c f_b + (1-c) l_0^*} =
cb$, and $b_{c l^*_{I(b)}} + (1-c) l_0^* = cI(b)^*$.  Thus, $I(cb) =
I(cI(b)^*)$.  By part (a), $I(cI(b)^*) = cI(b)$.  Thus, $I(cb) = cI(b)$, as
desired.  

If $c > 1$, there are two subcases.  If $||cb|| \le 1$, since $1/c < 1$,
by what we have just shown $I(b) = I(\frac{1}{c}(cb)) =
\frac{1}{c}I(cb)$.  Crossmultiplying, we have that $I(cb) = cI(b)$, as
desired.  And if $||cb||>1$, by definition, $I(cb) = ||cb||
I(bc/||cb||) = c||b||I(b/||b||)$ (since $bc/||cb|| = b/||b||$).
Since $||b|| \le 1$, by what we have shows $I(b) = I(||b|| (b/||b||) =
||b||I(b/||b||)$, so $I(b/||b||) = \frac{1}{||b||} I(b)$.  Again,
it follows that $I(cb) = cI(b)$.

Now suppose that $||b|| > 1$.  Then $I(b) = ||b|| I(b/||b||)$.
Again, we have two subcases.
If $||cb|| > 1$, then $$I(cb) = ||cb|| I(cb/||cb||) = c||b|| I(b/||b||) 
= cI(b).$$
And if $||cb|| \le 1$, by what we have shown for the case $||b|| \le 1$,
$$I(b) = I(\frac{1}{c} (cb))
= \frac{1}{c}I(cb),$$ so again $I(cb) = cI(b)$.

\commentout{
	By (\ref{property1}), $I( (cI(b))^*) = I(cb)$. 
	However, $I( (cI(b))^* ) = cI(b)$ by property 1.
	Therefore, we have
	\begin{align*}
I(c b) = c I(b)
	\end{align*}
	for $0 < c \leq 1$. 
	The case $c > 1$ reduces to the case of $0 < c\leq 1$. 
	To see how, consider any $c > 1$ and $b \in \cB^-$.
	$I(b) = I(\frac{1}{c} c b ) = \frac{1}{c} I(cb)$, therefore $I(cb)=cI(b)$.
	
	Next we consider the case where act utilities are bounded. 
	There are three sub-cases where $cb \geq (-1)^*$. 
	For the case $0 < c \leq 1$ and $b \geq (-1)^*$, by the same application of the independence axiom, we can conclude that $I(cb)=cI(b)$.
	For the case where $c > 1$, $b \geq (-1)^*$, note that $I(b) = I(\frac{1}{c} cb) = \frac{1}{c} I(cb)$, so $I(cb)=cI(b)$. 
	For the remaining case where $0 < c \leq 1$ and $b \not\geq (-1)^*$, it is also the case that $I(b) = I(\frac{1}{c} c b) = \frac{1}{c} I(cb)$, so again $I(cb)=cI(b)$. 
	
	Since this previous case covers all the cases where $cb \geq (-1)^*$, we'll now consider the case where $cb \not\geq (-1)^*$. 
	Recall from the definition of $I$ that $I(c b) = ||c b|| I( c b/ ||c b||)$, which is equal to $||c b|| I( c b/ ||c b||) = c I(b)$ by the above case.
	Hence $I(c b) = c I(b)$.
	
	Now for $b\in \cB\backslash \cB^-$, consider any $b\in \cB-\cB^-$. Note that for $c>0$, $I(cb)=I(cb^-)$, where 
$$cb^-(s) = \begin{cases}
cb(s) \text{, if } b(s) \le 0, \\
0, \text{ if } b(s) > 0\end{cases}.$$
Since $cb^-\in \cB^-$, $I(cb^-)=cI(b^-)$. Therefore $I(cb)=I(cb^-)=cI(b^-)=cI(b)$, as desired.
	
\item $I$ is monotonic: \\
}
For part (c), first note that if $b, b' \in \cB^-$.  If $||b|| \le 1$ and
$|b'|| \le -1$, then the acts $f_b$ and $f_{b'}$ exist.  Moreover, since
$b \ge b'$, we must have $(f_b(s))^* \succeq (f_{b'})^*(s)$ for all
states $s \in S$. Thus, by Monotocity, $f_b \succeq f_{b'}$.   
If either $||b|| > 1$ or $||b'|| > 1$, let $n = \max(||b||,||b'||)$.
Then $||b/n|| \le 1$ and $||b'/n|| \le 1$.  Thus, $I(b/n) \ge I(b'/n)$,
by what we have just shown.  By part (b), $I(b) \ge I(b')$.
Finally, if either $b \in \cB - \cB^-$ or $b' \in \cB - \cB^-$, 
note that if $b \ge b'$, then 
$b^- \ge (b')^-$.  By definition, $I(b) = I(b^-)$ and $I(b') =
I(b')^-$; moreover, $b^-, (b')^- \in \cB^-$.  Thus, by the argument
above, $I(b) \ge I(b^-)$.

For part (d), 
note that if $b_n \rightarrow b$, then for all $k$, there exists
$n_k$ such that $b_n - (1/k)^* \le b_n \le b_n + (1/k)^*$ 
for all $n\geq n_k$.
Moreover, by
the monotonicity of $I$ (part (c)), we have that $I(b - (1/k)^*) \le
I(b_n) \le I(b + (1/k)^*)$.   Thus, it suffices to show that 
$I(b - (1/k)^*) \rightarrow I(b)$ and that $I(b + (1/k)^*) \rightarrow
I(b)$. 

To show that $I(b - (1/k)^*) \rightarrow I(b)$, we must show that  for
all $\epsilon > 0$, there exists $k$ such that $I(b- (1/k)^*) \ge I(b) -
\epsilon$.  By positive homogeneity (part (b)), we can assume without
loss of generality that $||b - (1/2)^*|| \le 1$ and that $||b||
\le 1$.  Fix $\epsilon > 0$.  If $I(b - (1/2)^*) \ge I(b) - \epsilon$, then
we are done.  If not, then $I(b) > I(b)- \epsilon  > I(b - (1/2)^*) $.
Since $||b|| \le 1$ and $||b-(1/2)^*|| \le 1$, $f_b$ and $f_{b-(1/2)^*}$
exist.  Moreover, by Lemma~\ref{lem:Iprop},
$f_b \succ f_{(I(b) - \epsilon)^*} \succ f_{b-(1/2)^*} $.
By mixture continuity, for some $p\in (0,1)$, we have
$ pf_b  + (1-p) f_{(b-(1/2)^*} \succ f_{(I(b) - \epsilon)^*} $.
It is easy to check that $b_{p f_b  + (1-p) f_{b-(1/2)^*}} = b -
(1-p)(1/2)^*$.  Thus, by Lemma~\ref{lem:Iprop}, 
$f_{b-(1-p)(1/2)^*} \succeq f_{(I(b)-\epsilon)^*}$, and 
$I(b - (1-p)1/2)^*) > I(b) - \epsilon$.  Choose $k$ such that $1/k <
(1-p)(1/2)$.  
Then $I(b-(1/k)^*) \ge I(b - (1-p)1/2)^*) > I(b) - \epsilon$,
as desired.

The argument that $I(b + (1/k)^*) \rightarrow I(b)$ is similar and left
to the reader.  


For part (e), 
first suppose that $b, b' \in \cB^-$.
If $||b||, ||b^-|| \le 1$, and $I(b), I(b') \ne 0$, 
consider $\frac{b}{I(b)}$ and $\frac{b'}{I(b')}$. 
Since $I( \frac{b}{I(b)} ) = I(\frac{b'}{I(b')}) = 1$, 
it follows from Lemma~\ref{lem:key} that 
$f_{\frac{b}{I(b)}} \sim f_{\frac{b'}{I(b')}}$.
By Ambiguity Aversion, for all $p\in (0,1]$, $p f_{\frac{b}{I(b)}} +
(1-p) f_{\frac{b'}{I(b')}} \succeq f_{\frac{b}{I(b)}}$. 
Thus, $I( \frac{I(b)}{I(b) + I(b') } \frac{b}{I(b)} + \frac{I(b')}{I(b)
+ I(b') }\frac{b'}{I(b')} ) \geq I( \frac{b}{I(b)} ) = I(
\frac{b'}{I(b')}) = 1 $.  
Hence, $ I( b + b' ) \geq I( b ) + I( b' )$.  

If $b, b^- \in \cB^-$ and either $||b|| > 1$ or $||b'|| > 1$, and both
$I(b) \ne 0$ and $I(b') \ne 0$, then the result easily follows by positive
homogeneity (property (b)).

If $b, b^- \in \cB-$ and either $I(b)=0$ or $I(b') = 0$, 
let $b_n = b- \frac{1}{n}^*$ and $b'_n = b' - \frac{1}{n}^*$.  Clearly
$||b_n|| > 0$, $||b'_n|| > 0$, $b_n \rightarrow b$, and $b_n'
\rightarrow b_n'$.  By our argument above, $I(b_n + b_n') \ge I(b_n) +
I(b_n')$ for all $n \ge 1$.  The result now follows from continuity.  
	
Finally, if either $b \in \cB - \cB^-$ or $b' \in \cB - \cB^-$, observe
that 
	$$ (b+b')^-(s) \begin{cases} 
	= b^-(s) + b'^-(s) ,\text{ if } b(s) \leq 0,b'(s) \leq 0 \\
	= b^-(s) + b'^-(s) ,\text{ if } b(s) \geq 0 ,b'(s) \geq 0 \\
	\geq b^-(s) + b'^-(s) ,\text{ if } b(s)>0 ,b'(s) \leq 0 \\
	\geq b^-(s) + b'^-(s) ,\text{ if } b(s)\leq 0,b'(s) > 0.
	\end{cases}
	$$
	Therefore, $(b+b')^- \geq b^- + b'^-$. 
	Thus, $I( b + b' ) = I( (b+b')^- ) \geq I(b^- + b'^-) $ by
the monotonicity of $I$, and  
$I(b^- + b'^-) \geq I(b^-) + I(b'^-)$ by superadditivity of $I$ on
$\cB^-$.  	Therefore, $I(b+b') \geq I(b) + I(b')$.
\end{proof}


\subsection{Defining the weights}

In this section, we use $I$ to define a weight 
$\alpha_{\Pr}$ for each probability $\Pr \in \Delta(S)$.  The heart of the
proof involves showing that the resulting set $\cP^+$ so determined gives
us the desired representation.

Given a set $\cP^+$ of
weighted probability measures, for $b \in \cB^-$, define
$$\NWREG(b) =  \inf_{\Pr \in \cP} \alpha_{\Pr} (\sum_{s \in S} b(s) \Pr(s)).$$
Note that $\NWREG$ is the negative of the weighted regret when the menu
is $\cB^-$. 
%
Define
$$\NREG(b) =  \inf_{\Pr \in \cP} \sum_{s \in S} b(s) \Pr(s).$$
and
$$\NREG_{\Pr}(b) =  \sum_{s \in S} b(s) \Pr(s) = E_{\Pr}b.$$





For each probability $\Pr \in \Delta(S)$, define 
\begin{align} \label{eqn:alpha} \alpha_{\Pr} = \sup\{\alpha \in \R: \alpha \NREG_{\Pr}(b) \ge I(b) 
\mbox{ for all $b \in \cB^-$}\}.\end{align}
Note that $\alpha_{\Pr} \ge 0$ for all distributions $\Pr \in \Delta(S)$,
since $0 \ge 
I(b)$ for $b \in \cB^-$ (by monotonicity); and $\alpha_{\Pr} \le
1$, since $\NREG_{\Pr}((-1)^*) = I((-1)^*) = -1$ for all distributions
$\Pr$.  Thus, $\alpha_{\Pr} \in [0,1]$.
Moreover, it is immediate from the definition of $\alpha_{\Pr}$ that
$\alpha_{\Pr} \NREG_{\Pr}(b) \ge   I(b)$ for all $b \in \cB^-$.  
The next lemma shows that there exists a probability $\Pr$ where we have
equality.


\begin{lemma}\label{lem:weights}
\begin{itemize}
\item[(a)] For some distribution  $\Pr$,  we have $\alpha_{\Pr} = 1$.  
\item[(b)] For all $b \in \cB^-$, there exists $\Pr$ such that $\alpha_{\Pr}
\NREG_{\Pr}(b) = I(b)$.
\end{itemize}
\end{lemma}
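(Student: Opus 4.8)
The plan is to run a supporting-hyperplane argument in the spirit of Gilboa and Schmeidler, exploiting that $I$ is concave. First I would record two consequences of Lemma~\ref{lem:Iproperties}: positive homogeneity together with superadditivity make $I$ concave on the finite-dimensional space $\cB \cong \R^{|S|}$, and monotonicity gives $I(b) = I(b^-) \le I(0^*) = 0$ for every $b \in \cB$, so $I$ is bounded above by $0$. Since a finite concave function on $\R^{|S|}$ has a nonempty superdifferential everywhere, for each $b_0 \in \cB$ there is a vector $\mathbf{p} \in \R^{|S|}$ with $I(b) \le I(b_0) + \sum_{s \in S}\mathbf{p}(s)(b(s) - b_0(s))$ for all $b \in \cB$.

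Next I would upgrade this supporting vector to a sub-probability measure that supports $I$ linearly. Plugging $b = c b_0$ and using positive homogeneity, $I(cb_0) = cI(b_0)$, forces $I(b_0) = \sum_{s}\mathbf{p}(s)b_0(s)$ (the Euler relation: the cases $c>1$ and $c<1$ give the two opposite inequalities). Hence the support inequality simplifies to $I(b) \le \sum_{s}\mathbf{p}(s)b(s)$ for all $b$, with equality at $b_0$. Monotonicity of $I$ then yields $\mathbf{p} \ge 0$: perturbing $b_0$ upward in a single coordinate $s$ cannot decrease $I$, while the right-hand side changes by $\mathbf{p}(s)$ times the perturbation. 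Evaluating the inequality at $b = (-1)^*$ and using $I((-1)^*) = -1$ (Lemma~\ref{lem:Iproperties}(a)) gives $-1 \le -\sum_s \mathbf{p}(s)$, i.e.\ $\sum_s \mathbf{p}(s) \le 1$. Thus $\mathbf{p}$ is a sub-probability measure supporting $I$ at $b_0$.

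For part (b) I would fix $b_0 \in \cB^-$ and take such a $\mathbf{p}$. If $\alpha := \sum_s \mathbf{p}(s) > 0$, set $\Pr = \mathbf{p}/\alpha$, so $\sum_s \mathbf{p}(s)b(s) = \alpha\,\NREG_{\Pr}(b)$. The inequality $\alpha\,\NREG_{\Pr}(b) \ge I(b)$ on $\cB^-$ shows $\alpha \le \alpha_{\Pr}$ by the definition \eqref{eqn:alpha}, while equality at $b_0$ gives $\alpha\,\NREG_{\Pr}(b_0) = I(b_0)$. Since $b_0 \le 0$ we have $\NREG_{\Pr}(b_0) \le 0$; if it is strictly negative, combining $\alpha_{\Pr}\NREG_{\Pr}(b_0) \ge I(b_0) = \alpha\,\NREG_{\Pr}(b_0)$ with $\alpha \le \alpha_{\Pr}$ forces $\alpha_{\Pr} = \alpha$, whence $\alpha_{\Pr}\NREG_{\Pr}(b_0) = I(b_0)$; if $\NREG_{\Pr}(b_0) = 0$ then $I(b_0) = 0$ and the identity is immediate. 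If instead $\alpha = 0$, then $\mathbf{p} = 0$ (as $\mathbf{p} \ge 0$), so $I(b_0) = 0$, and any $\Pr$ works because $0 \ge \alpha_{\Pr}\NREG_{\Pr}(b_0) \ge I(b_0) = 0$. Part (a) then drops out by applying the construction at $b_0 = (-1)^*$: there the equality $I((-1)^*) = -\sum_s \mathbf{p}(s)$ together with $I((-1)^*) = -1$ forces $\sum_s \mathbf{p}(s) = 1$, so $\mathbf{p}$ is a genuine probability measure $\Pr$, and $\NREG_{\Pr}(b) \ge I(b)$ on $\cB^-$ gives $\alpha_{\Pr} \ge 1$, hence $\alpha_{\Pr} = 1$.

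The main obstacle is extracting a hyperplane with the correct normalization --- guaranteeing that the supporting vector is honestly a (sub-)probability measure, not an arbitrary signed functional. This is exactly where monotonicity (for nonnegativity) and the normalization $I((-1)^*) = -1$ (for total mass at most $1$, and for mass exactly $1$ in part (a)) are indispensable, so these properties of $I$ must be in hand before separating. The remaining delicate point is passing from the supporting inequality to the \emph{exact} equality demanded in part (b): this requires tracking the sign of $\NREG_{\Pr}(b_0)$ and treating the degenerate cases $\NREG_{\Pr}(b_0) = 0$ and $\alpha = 0$ by hand, rather than by the generic division argument.
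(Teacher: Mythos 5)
Your argument is correct, and it is the same in spirit as the paper's proof (both are supporting-hyperplane arguments for the concave functional $I$) but genuinely different in execution. The paper separates the point $(-1)^*$ (for part (a)) or the point $b$ (for part (b)) from the open convex superlevel set $\{b' : I(b') > I(b)\}$, then verifies by hand that the resulting linear functional $\lambda$ is positive, normalizes it so that $\lambda((-1)^*) = -1$ and hence is induced by a genuine probability measure $\Pr$, and finally computes $\alpha_{\Pr} = I(b)/\NREG_{\Pr}(b)$ directly from the definition of $\alpha_{\Pr}$ as a supremum; the degenerate case $I(b)=0$ is handled separately by exhibiting a point mass at a state with $b(s)=0$. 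You instead invoke nonemptiness of the superdifferential of the finite concave function $I$ on $\R^{|S|}$ (legitimate here, since the paper takes $S$ finite) and use the Euler relation from positive homogeneity to convert the affine support into a linear one; this packages the separation step more cleanly, produces a sub-probability measure directly, and lets you treat parts (a) and (b) with a single construction, part (a) being the instance $b_0=(-1)^*$. Your identification of the mass $\alpha$ of $\mathbf{p}$ with $\alpha_{\Pr}$ via the sign of $\NREG_{\Pr}(b_0)$ is sound, and the degenerate cases you isolate ($\NREG_{\Pr}(b_0)=0$ and $\alpha=0$) exactly absorb the paper's special case $I(b)=0$. Both routes rest on the same inputs from Lemma~\ref{lem:Iproperties} (positive homogeneity, superadditivity, monotonicity, continuity, and $I((-1)^*)=-1$); yours is, if anything, slightly more economical, at the cost of appealing to a packaged convex-analysis fact rather than an explicit separation.
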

\begin{proof}
The proofs of both part (a) and (b) use a standard separation result:
If $U$ is an open convex subset of $\cB$, and $b \notin U$, then there
is a linear functional $\lambda$ that separates $U$ from $b$, that is,
$\lambda(b') > \lambda(b)$ for all $b' \in U$.   We proceed as follows

For part (a),
we must show that for some $\Pr$, for all $b\in
\cB^-$, $\NREG_{\Pr}(b) \ge I(b)$.    
Since $\NREG_{\Pr}(b) = E_{\Pr} b$, it suffices to show that $E_{\Pr}(b)
\ge I(b)$ for  all $b \in \cB^-$.

Let $U = \{b' \in \cB: I(b') > -1\}$.  $U$ is open (by continuity of $I$), and convex (by
positive homogeneity and superadditivity of $I$), and $(-1)^* \notin U$.  Thus,
there exists a linear functional $\lambda$ such that
$\lambda(b') > \lambda((-1)^*)$ for $b' \in U$.  

We want to show that $\lambda$ is a positive linear functional, that is,
that $\lambda(b) \ge 0$ if $b \ge 0^*$.
Since $0^* \in U$, and
$\lambda(0^*) = 0$, it follows that $\lambda((-1)^*) < 0$.  Since
$\lambda$ is linear, we can assume
without loss of generality that $\lambda((-1)^*) = -1$.  
Thus, for all $b'\in \cB^-$, $I(b') > -1$ implies $\lambda(b') > -1$. 
(The fact that $I(cb') = I(0^*)$ follows from the definition of $I$ on
elements in $\cB - \cB^-$.)  
Suppose that $c > 0$ and $b' \geq 0^*$.  
From the definition of $I$, it follows that $I(cb') = I(0^*) = 0 > -1$. 
So $c\lambda(b') = \lambda(cb') > -1$, so $\lambda(b') > -1/c$.
Since this is true for all $c > 0$, it must be the case that
$\lambda(b') \ge 0$.  
Thus, $\lambda$ is a positive functional.  

Define the probability distribution  $\Pr$ on $S$ by taking $\Pr(s) =
\lambda(1_{s})$.  To see that $\Pr$ is indeed a probability
distribution, note that since $1_{s} \ge 0$ and $\lambda$ is positive,
we must have $\lambda(1_{s}) \ge 0$.  Moreover,
$\sum_{s \in S} \Pr(s) = \lambda(1^*) = 1$.  In addition,  for all $b' \in
\cB$, we have 
$$\lambda(b') = 
\sum_{s \in S} \lambda(1_{s}) b'(s) = \sum_{s \in S} \Pr(s) b'(s) =
E_{\Pr}(b').$$

Next note that, for $b \in \cB^-$,
\begin{equation}\label{eq1}
\mbox{for all $c<0$, if $I(b) > c$, then $\lambda(b) > c$}.
\end{equation}  
For if $I(b) > c$, then $I(b/|c|) > -1$ by positive homogeneity, so
$\lambda(b/|c|) > -1$ and $\lambda(b) > 
c$. The result now follows. For if $b \in \cB^-$, then $I(b) \le I(0^*) =
0$ by monotonicity.  Thus, if $c < I(b)$, then $c < 0$, so, by (\ref{eq1}),
$\lambda(b) > c$. Since $\lambda(b) > c$ whenever $I(b) > c$, it follows that
$ E_{\Pr}(b) = \lambda(b) \ge  I(b)$, as desired.  

The proof of part (b) is similar to that of part (a).  We want to show
that, given $b \in \cB^-$, there exists $\Pr$ such that $\alpha_{\Pr}
\NREG_{\Pr}(b) = I(b)$.  
First supose that $||b|| \le 1$.  If $I(b) = 0$, then there must exist
some $s$ such that $b(s) = 0$, for otherwise there exists $c < 0$
such that $b \le c^*$, so $I(b) \le c$.  If $b(s) = 0$, let 
$\Pr_s$ be such that $\Pr_s(s) = 1$.  Then $\NREG_{\Pr_s}(b) = 0$, so
(b) holds in this case.

If $||b|| \le 1$ and $I(b) < 0$, 
let $U = \{b': I(b') > I(b)\}$.  Again, $U$ is open and convex,
and $b \notin U$, so there exists a linear functional $\lambda$ such that 
$\lambda(b') > \lambda(b)$ for $b' \in U$.  
Since $0^* \in U$ and
$\lambda(0^*) = 0$, we must have $\lambda(b) < 0$.  
Since $(-1)^* \leq b$, $(-1)^*$ is not in $U$, and therefore we also have $\lambda((-1)^*) < 0$.  Thus, we
can assume without loss of generality that $\lambda((-1)^*) = -1$, and
hence $\lambda((1)^*) = 1$.    
The same argument as above shows that $\lambda$ is positive:
for all $c>0$ and $b'\geq 0^*$, $I(cb') = 0$ as before. 
Since $I(b)<0$, it follows that $I(cb') > I(b)$, so $cb' \in U$
and $\lambda(cb') 
> \lambda(b) \geq \lambda((-1)^*) = -1$. 
Thus, as before, for all $c>0$, $b' \geq 0^*$, $\lambda(b') >
\frac{-1}{c}$, so $\lambda$ is a positive functional. 

Therefore, $\lambda$ determines a
probability distribution $\Pr$ such that, for all $b' \in \cB^-$, we have 
$\lambda(b') = E_{\Pr}(b')$.  This, of course, will turn out to be the
desired distribution.  To show this, we need to show that $\alpha_{\Pr}
= I(b)/\NREG_{\Pr}(b)$.
Clearly $\alpha_{\Pr} \le I(b)/\NREG_{\Pr}(b)$, since if $\alpha >
I(b)/\NREG_{\Pr}(b)$, then $\alpha \NREG_{\Pr}(b) < I(b)$ (since
$\NREG_{\Pr}(b) = \lambda(b) < 0$).  To show that
$\alpha_{\Pr} \ge I(b)/\NREG_{\Pr}b$, we must 
show that $(I(b) /\NREG_{\Pr}(b))\NREG_{\Pr}(b') \ge I(b')$ for
all $b' \in \cB^-$.  Equivalently, we must show that 
$I(b) \lambda(b')/\lambda(b) \ge I(b')$ for all $b' \in \cB^-$. 

Essentially the same argument used to prove (\ref{eq1}) also shows
$$\mbox{for all $c>0$, if $I(b') > cI(b)$, then $\lambda(b') > c \lambda(b)$}.$$
In particular, if $I(b') > cI(b)$, then by positive homogeneity, $\frac{I(b')}{c} > I(b)$, so $\frac{b'}{c}\in U$, and $\lambda(\frac{b'}{c}) > \lambda(b)$ and hence $\lambda(b') > c\lambda(b)$.

Thus, if $I(b')/(-I(b)) > c$ and $c < 0$, then $I(b') > -c I(b)$, and hence $\lambda(b')/(-\lambda(b)) > c$.
It follows that $\lambda(b')/(-\lambda(b)) \ge I(b')/(-I(b))$ for all $b'
\in \cB^-$.
Thus, $I(b) \lambda(b')/\lambda(b) \ge I(b')$ for all $b' \in \cB^-$, as
required. 

Finally, if $||b|| > 1$, let $b' = b/||b||$.  By the argument above,
there exists a probability measure $\Pr$ such that
$\alpha_{\Pr}\NREG_{\Pr}(b/||b||) = I(b/||b||)$.  Since $\NREG_{\Pr}(b/||b||)
= \NREG_{\Pr}(b)/||b||$, and $I(b/||b||) = I(b)/||b||$, we must have that 
$\alpha_{\Pr}\NREG_{\Pr}(b) = I(b)$.
\end{proof}

We can now complete the proof of Theorem~\ref{thm:completeness}.
By Lemma~\ref{lem:weights} and the definition of $\alpha_{\Pr}$, for all
$b\in \cB^-$,  
\begin{align}
\label{equ:defI} I(b) = 
& \inf_{\Pr \in \Delta(S)} \alpha_{\Pr} \NREG(b) \\
\notag &= \inf_{\Pr \in \Delta(S) }\left( \alpha_{\Pr}\sum_{s\in S}b(s)\Pr(s)
\right)\\ 
\notag &= \sup_{\Pr \in \cP }\left( -\alpha_{\Pr}\sum_{s\in
S}b(s)\Pr(s) \right) .
\end{align}
Recall that, by Lemma~\ref{lem:Iprop}, for all acts $f,g$ such that
$b_f, b_g \in \cB^-$,  
$f \succeq g$ iff $I( b_f ) \ge I( b_g )$.
Thus, $f \succeq g$ iff $$\sup_{\Pr \in \Delta(S) }\left(
-\alpha_{\Pr}\sum_{s\in S}u(f(s))\Pr(s) \right) 
\leq \sup_{\Pr \in \Delta(S) }\left( -\alpha_{\Pr}\sum_{s\in
S}u(g(s))\Pr(s) \right).$$ 
Note that, for $f \in M^* = \cB^-$, we have $\regret_{M^*,\Pr}(f) = \sup(-
u(f(s)) \Pr(s)$, since $0^*$ dominates all acts in $M^*$. 
Thus, $\succeq = \succeq_{M^*,\cP^+}^{S,Y,U}$, where $\cP^+ =
\{(\Pr,\alpha_{\Pr}: \Pr \in \Delta(S)\}$.
By Lemma~\ref{lem:stoye}, this means $(U,\cP^+)$ represents $\succeq_M$
for all menus $M$, as required. 
	
We have already observed that $U$ is unique up to affine
transformations, so it remains to show that $\cP^+$ is maximal.  This
follows from the definition of $\alpha_{\Pr}$. 
If $\succeq_M = \succeq_{M,(\cP')^+}^{S,Y,U}$, and $(\alpha',\Pr) \in
(\cP')^+$, then we claim that
$ \alpha' \in \{\alpha \in \R: \alpha \NREG_{\Pr}(b) \ge I(b) \mbox{ for all $b \in \cB^-$}\}$.
If not, there would be some $b\in \cB^-$ with $||b||
\leq \frac{1}{2}$, such that 
$ \alpha' \NREG_{\Pr}(b) < I(b)$, which, by the definition of
$\prec^{S,Y,U}_{M^*,(\cP')^*}$, means 
that
$ l^*_{-1} \prec^{S,Y,U}_{M^*,(\cP')^+} f_b \prec^{S,Y,U}_{M^*,(\cP')^+} l^*_{I(b)}$. 
Recall that  $I(b_f) = \inf \{ \gamma : l^*_{\gamma}
\succeq_{M^*} f \}$. 
Moreover, since $\prec^{S,Y,U}_{M^*,(\cP')^+}$ satisfies the
Mixture Continuity,
there exists some $p \in (0,1)$ such that $f_b
\prec^{S,Y,U}_{M^*,(\cP')^+} p l^*_{-1} + (1-p) l^*_{I(b)}
\prec^{S,Y,U}_{M^*,(\cP')^+} \prec^{S,Y,U}_{M^*,(\cP')^+} l^*_{I(b)}$.
This contradicts the definition of $I(b)$. 
Therefore, $ \alpha' \in \{\alpha \in \R: \alpha \NREG_{\Pr}(b) \ge I(b)
\mbox{ for all $b \in \cB^-$}\}$, and hence $\alpha' \leq \alpha_{\Pr}$. 

\subsection{Uniqueness of Representation}

In the preceding sections, we have shown that if a family of menu-dependent preferences $\succeq_M$ satisfies
axioms $1-10$, then $\succeq_M$ can be represented as minimizing
weighted expected 
regret with respect to a canonical set 
$\cP^+$ of weighted probabilities and a utility function. 
We now want to show uniqueness.  

In this section, we show that the canonical set of weighted
probabilities we constructed, when viewed as a set of subnormal
probability measures, is regular and includes 
at least one proper probability measure. 
Moreover, this set of sub-probability measures 
is the only regular set that induces a 
family of preferences 
$\succeq_M$ that
satisfies 
axioms $1-10$.
Our uniqueness result is analogous to the uniqueness results of Gilboa and Schmeidler \cite{GilboaSchmeidler1989}, who show that the convex, closed, and non-empty set of probability measures in their representation theorem for MMEU is unique.

By Lemma~\ref{lem:stoye}, it suffices to consider the preference relation $\succeq_{M^*}$.
The argument is based on two lemmas: the first lemma says that the
canonical set of sub-probability measures is regular; 
and the second lemma says that a set of sub-probability
measures representing $\succeq_{M^*}$ that is regular 
and contains at least one proper probability measure
is unique.  
The proof of this second lemma, like the proof of uniqueness in Gilboa and Schmeidler \cite{GilboaSchmeidler1989}, uses a separating hyperplane theorem to show the existence of acts on which two different representations must `disagree'.
However, a slightly different argument is required in our case, since our acts in $M^*$ must have utilities corresponding to nonpositive vectors in $\R^{|S|}$.

\begin{lemma}
Let $\cP^+$ be the canonical set of weighted probability measures representing $\succeq_{M^*}$. 
The set $C(\cP^+)$ of sub-probability measures is regular.
\end{lemma}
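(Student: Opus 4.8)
The plan is to characterize $C(\cP^+)$ as an intersection of half-spaces, from which convexity and closedness are immediate, and to derive downward-closedness from the sign of the elements of $\cB^-$. The existence of a proper probability measure is already supplied by Lemma~\ref{lem:weights}(a), so the work is entirely in establishing regularity.

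Concretely, I would first show that
$$C(\cP^+) = \left\{\mathbf{p} \text{ a sub-probability measure on } S : \textstyle\sum_{s\in S} b(s)\mathbf{p}(s) \ge I(b) \text{ for all } b \in \cB^-\right\}.$$
Call the set on the right $D$. For the inclusion $C(\cP^+)\subseteq D$, take $\mathbf{p} = \alpha\Pr$ with $0 \le \alpha \le \alpha_{\Pr}$; then $\sum_{s} b(s)\mathbf{p}(s) = \alpha\NREG_{\Pr}(b)$, and since $b \le 0^*$ forces $\NREG_{\Pr}(b) \le 0$, the inequality $\alpha \le \alpha_{\Pr}$ yields $\alpha\NREG_{\Pr}(b) \ge \alpha_{\Pr}\NREG_{\Pr}(b) \ge I(b)$, where the last inequality is immediate from the definition of $\alpha_{\Pr}$. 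For the reverse inclusion $D \subseteq C(\cP^+)$, I would write an arbitrary $\mathbf{p}\in D$ as $\mathbf{p} = \beta\Pr'$ with $\beta = \mathbf{p}(S)$ and $\Pr' \in \Delta(S)$; membership in $D$ says exactly that $\beta\NREG_{\Pr'}(b) \ge I(b)$ for all $b \in \cB^-$, so $\beta$ lies in the set whose supremum defines $\alpha_{\Pr'}$, giving $\beta \le \alpha_{\Pr'}$ and hence $\mathbf{p}\in C(\cP^+)$. This step uses the remark, recorded just after the definition of $\alpha_{\Pr}$, that $\alpha_{\Pr}\NREG_{\Pr}(b)\ge I(b)$ genuinely holds (the defining supremum is attained).

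Given this characterization, regularity follows quickly. For convexity and closedness, observe that $D$ is the intersection of the closed convex set of sub-probability measures, namely $\{\mathbf{p}\in\R^{|S|} : \mathbf{p}(s)\ge 0 \text{ for all } s,\ \sum_{s} \mathbf{p}(s)\le 1\}$, with the family of closed half-spaces $\{\mathbf{p} : \sum_{s} b(s)\mathbf{p}(s)\ge I(b)\}$, one for each $b\in\cB^-$; an arbitrary intersection of closed convex sets is closed and convex. For downward-closedness, suppose $\mathbf{p}\in D$ and $\mathbf{q}\le\mathbf{p}$, so $0\le\mathbf{q}(s)\le\mathbf{p}(s)$ for every $s$. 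For $b\in\cB^-$ we have $b(s)\le 0$, hence $b(s)\mathbf{q}(s)\ge b(s)\mathbf{p}(s)$ for each $s$; summing over $s$ gives $\sum_{s} b(s)\mathbf{q}(s)\ge\sum_{s} b(s)\mathbf{p}(s)\ge I(b)$, so $\mathbf{q}\in D = C(\cP^+)$.

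The main obstacle is getting the half-space characterization exactly right, in particular the direction $D\subseteq C(\cP^+)$. This is where the definition of $\alpha_{\Pr}$ as a supremum over admissible scalings is essential, and where one must be careful that an element of $D$ is not a priori a scalar multiple of a fixed $\Pr$: the fix is to decompose it as $\beta\Pr'$ and invoke the defining supremum for $\Pr'$. Once the characterization is in hand, the three regularity properties are routine, with downward-closedness resting on the elementary but crucial fact that elements of $\cB^-$ are nonpositive.
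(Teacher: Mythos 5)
Your proposal is correct and follows essentially the same route as the paper: both rest on the characterization $\mathbf{p}\in C(\cP^+)$ iff $E_{\mathbf{p}}(b)\ge I(b)$ for all $b\in\cB^-$ (which the paper simply asserts ``by definition'' and you spell out via the supremum defining $\alpha_{\Pr}$), and both derive downward-closedness from the nonpositivity of $b$ in exactly the same way. Packaging convexity and closedness as ``an intersection of closed half-spaces'' is just a compact restatement of the paper's direct verifications, so there is nothing substantive to distinguish the two arguments.
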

\begin{proof}
It is useful to note that, by definition, $\mathbf{p} \in C(\cP^+)$ if
and only if  
\begin{align*}
E_{ \mathbf{p}}(b) \ge I(b) \mbox{ for all $b \in \cB^-$}
\end{align*}
(where expectation with respect to a subnormal probability measure is
defined in the obvious way).

Recall that a set is regular if it is convex, closed, and downward-closed.
We first show that $C(\cP^+)$ is downward-closed.
Suppose that $\mathbf{p} \in C(\cP^+)$ and $\mathbf{q} \le \mathbf{p}$
(i.e., $\mathbf{q}(s) \leq \alpha \Pr(s)$ for all $s\in S$. 
%
Since $\mathbf{p} \in C(\cP^+)$, 
$E_{ \mathbf{p}}(b) \ge I(b)$ for all $b \in \cB^-$.
Since $\mathbf{q} \le \mathbf{p}$ and, if $b \in cB^-$, we have $b \le
0^*$, it follows that
$E_{ \mathbf{q}}(b) \ge E_{ \mathbf{p}}(b) \ge I(b)$ for all $b
\in \cB^-$, 
and thus $\mathbf{q}\in C(\cP^+)$.

To see that $C(\cP^+)$ is closed,
let $\mathbf{p} = \lim_{n\to \infty} \mathbf{p}_n$, where each
$\mathbf{p}_n \in C(\cP^+)$. 
Since $\mathbf{p}_n \in C(\cP^+)$ it must be the case that 
$E_{\mathbf{p}_n}(b) \ge I(b)$ for all $b \in \cB^-$.  By the continuity
of expectation, it follows that
$E_{\mathbf{p}}(b) \ge I(b)$ for all $b \in \cB^-$.  Thus, 
$\mathbf{p} \in C(\cP^+)$.

To show that $C(\cP^+)$ is convex, 
suppose that $\mathbf{p}, \mathbf{q} \in C(\cP^+)$.  Then 
$E_{ \mathbf{p}}(b) \ge I(b)$ and 
$E_{ \mathbf{q}}(b) \ge I(b)$ 
for all $b \in \cB^-$.  It easily follows that for all $a \in (0,1)$, 
$E_{ a\mathbf{p} + (1-a) \mathbf{q}}(b) \ge I(b)$ for all $b \in
\cB^-$.
Thus, $a\mathbf{p} + (1-a) \mathbf{q} \in C(\cP^+)$.
\end{proof}

\begin{lemma}
A set of sub-probability measures representing $\succeq_{M^*}$ that is
regular, 
and has at least one proper probability
measure is unique. 
\end{lemma}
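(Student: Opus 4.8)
The plan is to prove that the canonical set $C(\cP^+)$ produced in the previous subsection is in fact the \emph{only} regular set containing a proper probability measure that represents $\succeq_{M^*}$. So let $C'$ be an arbitrary regular set of sub-probability measures that contains a proper probability measure and represents $\succeq_{M^*}$, write $C = C(\cP^+)$, and aim to show $C' = C$. Recall the characterization noted at the start of the previous lemma's proof: $\mathbf{p} \in C$ if and only if $E_{\mathbf{p}}(b) \ge I(b)$ for all $b \in \cB^-$.

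The first step I would take is to show that $C'$ induces the \emph{same} functional on $\cB^-$ as the canonical construction, i.e.\ that $I'(b) := \inf_{\mathbf{p} \in C'} E_{\mathbf{p}}(b)$ equals $I(b)$ for all $b \in \cB^-$. For a constant $c \in [-1,0]$ we have $E_{\mathbf{p}}(c^*) = c\,\mathbf{p}(S)$, and since $c \le 0$ the infimum over $C'$ is attained at the proper probability measure, giving $I'(c^*) = c = I(c^*)$. The hypothesis that $C'$ represents $\succeq_{M^*}$ means $f \succeq_{M^*} g \Leftrightarrow I'(b_f) \ge I'(b_g)$, the same order induced by $I$. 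Now for any $b \in \cB^-$ with $\|b\| \le 1$ the act $f_b$ exists, and Lemma~\ref{lem:key} gives $f_b \sim_{M^*} l^*_{I(b)}$; applying the representation via $C'$ to this indifference yields $I'(b) = I'(I(b)^*) = I(b)$. Positive homogeneity (Lemma~\ref{lem:Iproperties}(b) for $I$, and immediately from the definition as an infimum of linear functionals for $I'$) then extends this equality to all of $\cB^-$.

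Given $I' = I$ on $\cB^-$, the inclusion $C' \subseteq C$ is immediate: any $\mathbf{q} \in C'$ satisfies $E_{\mathbf{q}}(b) \ge \inf_{\mathbf{p}\in C'} E_{\mathbf{p}}(b) = I(b)$ for every $b \in \cB^-$, so $\mathbf{q} \in C$. The reverse inclusion is the crux and is where I would invoke a separating-hyperplane theorem. Suppose toward a contradiction that some $\mathbf{q} \in C$ is not in $C'$. Since sub-probability measures form a bounded set, $C'$ is compact, and being convex it can be strictly separated from $\mathbf{q}$; because linear functionals on $\R^{|S|}$ are exactly expectations, this produces a utility act $b \in \cB$ with $E_{\mathbf{q}}(b) < \inf_{\mathbf{p} \in C'} E_{\mathbf{p}}(b)$.

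The hard part will be that this separating $b$ need not lie in $\cB^-$, whereas the representation only controls $I$ on $\cB^-$; downward-closedness of $C'$ is exactly the tool that resolves this. If $b(s_0) > 0$ for some state $s_0$, let $\tilde{b}$ agree with $b$ except $\tilde{b}(s_0) = 0$. For any $\mathbf{p} \in C'$, setting its $s_0$-coordinate to $0$ keeps it in $C'$ (downward-closedness) and does not increase $E_{\mathbf{p}}(b)$, so $\inf_{\mathbf{p} \in C'} E_{\mathbf{p}}(b)$ is attained along measures with $\mathbf{p}(s_0) = 0$; on those $E_{\mathbf{p}}(b) = E_{\mathbf{p}}(\tilde{b})$, and since $\tilde{b}$ ignores coordinate $s_0$ this common value equals $\inf_{\mathbf{p}\in C'} E_{\mathbf{p}}(\tilde{b})$. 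Meanwhile $E_{\mathbf{q}}(\tilde{b}) = E_{\mathbf{q}}(b) - b(s_0)\mathbf{q}(s_0) \le E_{\mathbf{q}}(b)$, so the strict separation is preserved. Iterating over all states with positive $b$-value replaces $b$ by its negative part $b^- \in \cB^-$ while keeping $E_{\mathbf{q}}(b^-) < \inf_{\mathbf{p}\in C'} E_{\mathbf{p}}(b^-) = I(b^-)$. This contradicts $\mathbf{q} \in C$, which forces $E_{\mathbf{q}}(b^-) \ge I(b^-)$. Hence $C \subseteq C'$, and combined with the first inclusion we get $C' = C = C(\cP^+)$, which establishes uniqueness.
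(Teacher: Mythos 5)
Your proof is correct, and it is organized differently from the paper's. The paper takes two arbitrary regular representing sets $C_1$, $C_2$ each containing a proper probability measure, picks $\mathbf{q} \in C_2 \setminus C_1$, and separates $\mathbf{q}$ not from $C_1$ itself but from an artificial extension $\overline{C}_1$ that is unbounded below in every coordinate; the unboundedness is what forces the separating direction $\theta$ to be nonpositive (a positive coordinate would let $\theta\cdot\mathbf{p}$ be driven below $c$). It then uses the indifference $f_\theta \sim l^*_d$ and the presence of a proper measure in each set to conclude that both minima equal $d$, contradicting the separation. You instead compare an arbitrary regular representing set $C'$ against the canonical $C(\cP^+)$, first proving that the induced functional $I'$ coincides with $I$ on $\cB^-$ (via constant acts, Lemma~\ref{lem:key}, and positive homogeneity) --- this is the same content the paper uses implicitly when it equates both minima to $d$, but making it explicit buys you the inclusion $C' \subseteq C(\cP^+)$ with no separation argument at all. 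For the reverse inclusion you separate $\mathbf{q}$ from the compact convex set $C'$ directly and then repair the separating direction by truncating its positive coordinates, checking that downward-closedness of $C'$ preserves the strict inequality; this truncation trick plays exactly the role of the paper's unbounded extension, and both devices correctly exploit the same structural hypothesis (downward-closedness). The two proofs are therefore the same in their essential ingredients (separating hyperplane, downward-closedness to force a direction in $\cB^-$, a proper probability measure to pin down the values on constants), but your decomposition into ``$I'=I$, then two inclusions'' is somewhat cleaner and makes visible where each hypothesis on the representing set is used.
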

\begin{proof}
Suppose for contradiction that there exists two regular sets of
subnormal 
probability distributions, $C_1$ and $C_2$, that 
represent $\succeq_{M^*}$ and have at least one proper probability measure.  

First, without loss of generality, let $\mathbf{q} \in C_2 \backslash C_1$.
We actually look at an extension of $C_1$ that is downward-closed
in each component to $-\infty$. 
Let $\overline{C}_1 = \{ \mathbf{p} \in \R^{|S|} :
\mathbf{p} \leq \mathbf{p'} \}$.  Note an element $\mathbf{p}$ of
$\overline{C}_1$ may not be subnormal probability measures; we do not
require that $\mathbf{p}(s) \ge 0$ for all $s \in S$.
Since $\overline{C}_1$ and $\{\mathbf{q}\}$ are closed, convex, and
disjoint, and $\{\mathbf{q}\}$ is compact, the separating hyperplane
theorem \cite{Rockafellar} says that there exists $\theta \in \R^{|S|}$ and $c\in \R$ such
that  
\begin{align}\label{equ:separating}
\theta \cdot \mathbf{p}> c \text{ for all } \mathbf{p}\in \overline{C}_1 \text{, and } \theta \cdot \mathbf{q} < c.
\end{align}
By scaling $c$ appropriately, we can assume that $|\theta(s)| \leq 1$
for all $s\in S$.  
Now we argue that it must be the case that $\theta(s) \leq 0$ for all
$s\in S$ (so that $\theta$ corresponds to the utility profile of
some act in $M^*$). 
Suppose that
$\theta(s') > 0 $ for some $s'\in S$. 
By (\ref{equ:separating}), $\theta \cdot \mathbf{p} > c \text{ for all } \mathbf{p}\in \overline{C}_1$. 
However, consider $\mathbf{p^*} \in \overline{C}_1$ defined by 
\begin{align*}
\mathbf{p^*}(s) = 
\begin{cases}
0 \text{, if } s\neq s' \\
\frac{-|c|}{\theta(s)} \text{, if } s = s'.
\end{cases}
\end{align*}
Clearly, $\theta \cdot \mathbf{p^*} \leq c $, contradicting (\ref{equ:separating}).
Thus it must be the case that  $\theta(s) \leq 0$ for all $s\in S$.

Consider the $\theta$ given by the separating hyperplane theorem, and let
$f$ be an act such that $u\circ f = \theta$.  
By continuity, $f \sim_{M^*} l^*_d$ for some constant act $l^*_d$. 
%
Since $C_1$ and $C_2$ both represent $\succeq_{M^*}$,
and $C_1$ and $C_2$ both contain a proper probability
measure, 
\begin{align*}
\min_{\mathbf{p} \in C_1} \mathbf{p}\cdot (u\circ f) = \min_{\mathbf{p} \in C_1} \mathbf{p}\cdot (u\circ l^*_d) = d = \min_{\mathbf{p} \in C_2} \mathbf{p} \cdot (u\circ f) .
\end{align*}
%
However, by (\ref{equ:separating}), 
\begin{align*}
\min_{\mathbf{p}\in C_1} \mathbf{p}\cdot (u\circ f)  > c > \min_{\mathbf{p} \in C_2} \mathbf{p} \cdot (u\circ f), 
\end{align*}
which is a contradiction. 

\end{proof}




\bibliographystyle{abbrv}
\bibliography{joe,awareness}

\end{document}